\newtheorem{theorem}{Theorem}[section]
\newtheorem{proposition}{Proposition}[section]
\newtheorem{define}{Definition}[section]
\newtheorem{remark}{Remark}[section]
\newtheorem{problem}{Problem}[section]
\newenvironment{proof}[1]{\textit{Proof} \textit{#1}:\hspace{2mm}}{$\square$\\}
\DeclareMathOperator*{\diag}{diag}
\providecommand{\Lgain}[3]{$||#1||_{\mathcal{L}_2}<#3||#2||_{\mathcal{L}_2}$}
\providecommand{\myline}{\hbox{\raisebox{0.4em}{\vrule depth 0pt
height 0.4pt width \textwidth}}}
\begin{document}

\begin{frontmatter}

\title{Memory Resilient Gain-scheduled State-Feedback Control of Uncertain LTI/LPV Systems with
Time-Varying Delays}%\thanksref{work}}

%\thanks[work]{A lighter version of this paper has been submitted to IFAC World Congress 2008}

\author[First]{C.~Briat}
\author[Second]{O.~Sename}
\author[Third]{J.F.~Lafay}

%\thanks[First2]{This author did nothing at all}

\address[First]{KTH, School of Mathematics, Division of Optimization and Systems Theory, SE-100 44 Stockholm, SWEDEN, {\tt\small cbriat@math.kth.se, corentin@briat.info, http://www.briat.info}}
\address[Second]{GIPSA-Lab, Departement of Control Systems (former LAG),  Grenoble Universit\'{e}s, ENSE$^3$ - BP46, 38402 Saint Martin d'H\`{e}res - Cedex FRANCE, {\tt\small olivier.sename@gipsa-lab.grenoble-inp.fr, http://www.lag.ensieg.inpg.fr/sename/}}
\address[Third]{IRCCyN - Centrale de Nantes, 1 rue de la No\"{e}  - BP 92101, 44321 Nantes Cedex 3 -
FRANCE, {\tt\small Jean-Francois.Lafay@irccyn.ec-nantes.fr}}

%\begin{keyword}
%    Linear parameter varying systems; Time delay systems; $\mathcal{H}_\infty$ gain scheduled controller; Relaxation
%algorithm; Parameter-dependent linear matrix inequalities
%\end{keyword}

\begin{keyword}
    Time delay systems; Controller Delay-resilience; Linear parameter varying systems; Robust LMIs; Relaxation
\end{keyword}

\begin{abstract}
The stabilization of uncertain LTI/LPV
time delay systems with time varying delays by state-feedback controllers is addressed. At the difference of other works in the literature, the proposed approach allows for the synthesis of resilient controllers with respect to
uncertainties on the implemented delay. It is emphasized that such controllers unify memoryless and exact-memory controllers usually considered in the literature. The solutions to the stability and stabilization problems are expressed in terms of LMIs which allow to check the stability of the closed-loop system for a given bound on the knowledge error and even optimize the uncertainty radius under some performance constraints; in this paper, the $\mathcal{H}_\infty$ performance measure is considered. The interest of the approach is finally illustrated through several examples.
\end{abstract}

\end{frontmatter}

\section{Introduction}

Since several years, time-delay systems \citep{Moon:01,
Niculescu:01, Zhang:01a, GuKC:03, Han:05, Gouaisbaut:06, Fridman:06,
Suplin:06, Kao:07, Ariba:09} have attracted more and more interest since they arise in various problems \citep{KolmaMy:99,
Niculescu:01} such as chemical processes, biological systems,
economic systems, etc. The presence of delays, in the
equations describing the process, is often responsible of
destabilizing effects and performance deterioration. Indeed, in fast
systems, even a small time-delay may have a very harmful effect, and
thus cannot be neglected. This has motivated the development of many
types of stability tests and matched controller design techniques
\citep{Zhong:06, Fridman:06a, XuLam:06, Briat:07, Briat:08a}. Nevertheless, while the theories for stability analysis and stabilization of LTI systems with constant delays are well established, the case of time-varying delays is still not well understood \cite{Louisell:99, Papa:07}.

On the other hand, over the past recent years, LPV systems
\citep{Apkarian:95a, Apkarian:98a, Wu:01, Scherer:01, Iwasaki:01a}
have been heavily studied since they offer a very general approach for the modeling
and the control of complex systems such as nonlinear systems. This
fresh upsurge of gain-scheduling based techniques is mainly due to
the emergence of LMIs \citep{Boyd:94a}, which provide a powerful
formalism for the expression of solutions of many problems arising
in systems and control theory. It is important to note that many
problems in LPV framework remain open and major improvements
are still expected for stability analysis as well as for control law
synthesis.

The stability analysis of LPV time-delay systems is still an open
problem and is discussed for instance in \citep{Zhang:02a, Zhang:05, Briat:08c, Briat:08phd} while the
control of LPV time-delay systems have also been studied in
\citep{Wu:01a, Zhang:02a, Zhang:05, Briat:08phd, Briat:09f,  Briat:09e} but still
remains sporadic. These systems belong to the intersection of two
families and hence inherit from the difficulties of each
one. Additionally, new difficulties arise, for instance several
robust control tools which are used to deal with LPV systems (such
as the projection lemma \cite{Gahinet:94a} or the dualization lemma \citep{Scherer:01}) are
difficult to apply to LPV time-delay systems. Indeed, the stability analysis of such systems cannot be tackled using classical Lyapunov functions (as for finite dimensional LPV systems) but must be analyzed using adapted tools, namely Lyapunov-Razumikhin functions and Lyapunov-Krasovskii functionals, which increase the number of decision variables.

The main contribution of the paper concerns
the synthesis of non-fragile controllers with respect to an
uncertain knowledge of the implemented delay. Several papers mention
non-fragility of observers/controllers but the robust stability
analysis is done only after the synthesis \citep{VerriestSP:02,
Sename:06}. In such a case, the non-fragility radius (maximal tolerable delay
uncertainty) is difficult to guarantee or optimize. In the proposed
approach, the non-fragility radius can be fixed by the designer or
even maximized. Moreover, since the approach is based on LMIs,
time-varying and uncertain systems can be also easily handled at the
difference of latter results which were based on frequency domain
methods, thus restricted to LTI systems and difficult to generalize
to the uncertain case. A close result but notably different is also provided in \citep{Ivanescu:00} where the control of a time-delay system with constant delay is performed using a controller with a different delay. However, no relationship between the delays is considered and thus the maximal admissible error cannot be analyzed from this result. Results on resilience of controllers with delay uncertainties are also provided in \citep{Briat:09e, Briat:09f} in the framework on \textit{delay-scheduled controllers} where the delay acts as a gain-scheduling parameter on the controller expression. In the current paper, the initial time-delay system system structure for both the system and the controller is kept, in order to develop a stabilization result in this domain. The resulting problem can be equivalently represented as a stabilization problem of a system with two delays where the delays are coupled through an algebraic inequality. Thus the problem reduces in a correct and efficient accounting of this inequality in the LMI-based stabilization result.

It is worth mentioning that almost all the works from the literature address the stabilization problem with memoryless (conservative) or exact-memory (non-implementable) controllers. The approach of the paper is more pragmatic and aims at designing controllers whose delay is different from the system one. Delay estimation techniques \cite{Belkoura:08} could be applied in order to determine the delay to implement in the controller. Such a class of controllers finds applications in the control of physical systems with state delays, such as the ones in \cite{KolmaMy:99}.

The goal of this article is not to provide better results on the
stability of time-delay systems by introducing new
Lyapunov-Krasovskii functionals but is to show that it is possible
to consider uncertainties of the delay knowledge and take it into
account in an efficient way in the synthesis. The proposed approach is very general and
can be extended to many types of (less conservative) different
functionals.

For a real square matrix $M$ we define $M^\mathcal{S}:=M+M^T$ where $M^T$ is its transpose. The space of
signals with finite energy is denoted by $\mathcal{L}_2$ and the
energy of $v\in\mathcal{L}_2$ is
$||v||_{\mathcal{L}_2}:=\left(\int_{0}^{+\infty}v^*(t)v(t)dt\right)^{1/2}$.
$\mathbb{S}^k_{++}$ denotes the cone of real symmetric positive
definite matrices of dimension $k$. $\mathbb{R}_+$
($\mathbb{R}_{++}$) denotes the set of nonnegative (positive) real
numbers. $\star$ denotes symmetric terms in symmetric matrices and
in quadratic forms. $\otimes$ and $\times$ denote the Kronecker and the cartesian product respectively. $co\{S\}$ stands for the convex hull of the set $S$. $col_i(\lambda_i)$ is the column vector with components $\lambda_i$.

% PAPER OUTLINE

The paper is structured as follows. In Section \ref{sec:def},
definitions and objectives of the paper are defined. In Section
\ref{sec:stab} we provide several delay-dependent stability results
for uncertain LPV time-delay systems with time-varying delays.
Section \ref{sec:sf} is devoted to the development of constructive
sufficient conditions to the existence of three types of parameter
dependent state-feedback controllers. Finally, in Section
\ref{sec:ex}, examples and discussions on the proposed approach
are provided.

\section{Definitions and Objectives}\label{sec:def}

%\subsection{Definitions}

The following class of systems will be considered in the paper:
\begin{equation}\label{eq:syst}
  \begin{array}{lcl}
    \dot{x}(t) &=& A(\lambda,\rho)x(t)+A_h(\lambda,\rho)x_h(t)+B(\lambda,\rho)u(t)\\
    &&+E(\lambda,\rho)w(t)\\
    z(t) &=& C(\lambda,\rho)x(t)+C_h(\lambda,\rho)x_h(t)+D(\lambda,\rho)u(t)\\
    &&+F(\lambda,\rho)w(t)\\
    x(\eta) &=& \phi(\eta),\ \eta\in[-h_M,0]
  \end{array}
\end{equation}
where $x\in\mathbb{R}^n$, $x_h(t)=x(t-h(t))\in\mathbb{R}^n$,
$u\in\mathbb{R}^m$, $w\in\mathbb{R}^p$, $z\in\mathbb{R}^q$ and
$\phi(\cdot)$ are respectively the system state, the delayed state,
the control input, the exogenous input, the controlled output and
the functional initial condition. The system
matrices are defined by
\begin{equation}\label{eq:systpol}
  \begin{bmatrix}
    A & A_h & B & E\\
    C & C_h & D & F\\
  \end{bmatrix}(\lambda,\rho)=\sum_{i=1}^N\lambda_i\begin{bmatrix}
    A_i & A_{hi} & B_i & E_i\\
    C_i & C_{hi} & D_i & F_i\\
  \end{bmatrix}(\rho)
\end{equation}
where $\lambda=col_i(\lambda_i)$ is time-invariant and belongs to
the unitary simplex $\Lambda$ defined by
\begin{equation}
  \Lambda:=\left\{\lambda_i\ge0:\
  \sum_{i=1}^N\lambda_i=1,\ i=1,\ldots,N\right\}
\end{equation}
The delay is assumed to belong to the set
\begin{equation}\label{eq:hset}
\mathscr{H}:=\left\{h:\mathbb{R}_+\to[0,h_M],h\le\mu<1\right\}
\end{equation}
with $h_M<+\infty$. The vector of parameters $\rho$ belongs to
\begin{equation}\label{eq:pset}
\mathscr{P}:=\left\{\rho:\mathbb{R}_+\to
U_\rho\subset\mathbb{R}^{N_p}, \dot{\rho}\in co\{U_\nu\}\right\}
\end{equation}
where $N_p>0$ is the number of parameters, $U_\rho$ is a connected
compact subset of $\mathbb{R}^{N_p}$. $U_\nu$ is the set of vertices
of the convex set in which the derivative of the
parameters evolves and is defined by
\begin{equation}\label{eq:Unu}
U_\nu:=\times_{i=1}^{N_p}\left\{\underline{\nu}_i,\bar{\nu}_i\right\}
\end{equation}
where $\underline{\nu}_i$ and $\bar{\nu}_i$ denote respectively the
upper and lower bound of $\dot{\rho}_i$; hence we have
$\dot{\rho}\in co\{U_\nu\}$.

The aim of the current paper is to find a control law based on a
parameter dependent state-feedback of the form
\begin{equation}\label{eq:sf}
u(t)=K_0(\rho)x(t)+K_h(\rho)x(t-d(t))
\end{equation}
where the gains $K_0(\rho)$ and $K_h(\rho)$ are sought such that the
controller stabilizes the uncertain LPV system (\ref{eq:syst}). Note that the delay $d(t)$ involved in the
control law is allowed to be different from the system delay $h(t)$.
First, the usual case $d(t)=h(t)$ will be considered and then the
more general case $d(t)=h(t)+\eta(t)$ with $|\eta(t)|\le\delta$ will
be solved in turn. To this aim, the following set is introduced
\begin{equation}\label{eq:dset}
\mathscr{D}_\delta:=\left\{d:\mathbb{R}_+\to[0,h_M],|d(t)-h(t)|\le\delta,\
h\in\mathscr{H}\right\}
\end{equation}
and defines the set of controller delays.

%\begin{remark}
%The derivative of the implemented delay is not considered since it
%is assumed here that $d(t)$ may be very noisy (thus
%nondifferentiable). However, if the implemented delay is chosen to
%be constant (e.g. $d(t)=h_M/2$ which is the value minimizing the error bound to $h_M/2$) then it would be less conservative to take into
%account the derivative of $d(t)$ which is zero. In such a case, the
%derivative of the error $h(t)-d(t)$ satisfies
%$|\dot{h}(t)-\dot{d}(t)|\le\mu<1$. This case is a direct extension of the
%proposed approach and will be not developed.
%\end{remark}

\begin{define}
  The following terminology is used for the controllers (\ref{eq:sf}):
  \begin{itemize}
    \item If $K_h(\cdot)=0$ the controller is referred to as a \textbf{memoryless
    controller};
    \item If $d(t)=h(t)$ for all $t\ge0$ (i.e. $\delta=0$ in (\ref{eq:dset})) then the controller
    is referred to as an \textbf{exact memory controller};
    \item If $|d(t)-h(t)|\le\delta$ for some $\delta>0$ then the controller is referred to as a
    \textbf{$\delta$-memory resilient controller}.
  \end{itemize}
\end{define}
The set $\mathscr{D}_\delta$ is parameterized by the uncertainty
radius $\delta\ge0$. Note that when $\delta=0$ the equality
$d(t)=h(t)$ holds for all $t\ge0$ and hence the $\delta$-memory
resilient controller coincides with exact memory controller. Note
also that if $\delta=h_M$ then the implemented delay may take any
value inside $[0,h_M]$ independently of the value of $h(t)$. In such a case, $h(t)$ can be considered as unknown, resulting then in the
particular case where the memoryless and the $h_M$-memory
resilient controllers are actually quite near, structurally speaking.
It will be illustrated in the examples that $\delta$-memory
resilient controllers connect together the well-known memoryless and
exact-memory controllers by providing a unique and generalized
expression for all controllers.

With this in mind, it is possible to state the main problem of the paper:
%\begin{problem}\label{pb:1}
%  Find a memoryless or exact-memory parameter dependent state-feedback controller (\ref{eq:sf}) with $d(t)=h(t)$ which
%  \begin{enumerate}
%    \item Robustly asymptotically stabilizes system (\ref{eq:syst}): $x(t)\rightarrow 0$ as $t\rightarrow +\infty$ with $w(t)=0$, for all parameter trajectories
%    $\rho\in\mathscr{P}$, for all delay $h\in\mathscr{H}$ and for all $\lambda\in\Lambda$.
%    \item Provides a guaranteed $\mathcal{L}_2$ performance attenuation gain from $w$ to $z$ satisfying $||z||_{\mathcal{L}_2}<\gamma||w||_{\mathcal{L}_2}$ with $x(\eta)=0,\
%    \eta\in[-h_M,0]$, $w(t)\ne0$ for all parameter trajectories
%    $\rho\in\mathscr{P}$, for all delay $h\in\mathscr{H}$ and for all $\lambda\in\Lambda$.
%  \end{enumerate}
%\end{problem}
\begin{problem}\label{pb:2}
  Find a parameter dependent $\delta$-memory resilient state-feedback controller (\ref{eq:sf}) which
  \begin{enumerate}
     \item Robustly asymptotically stabilizes system (\ref{eq:syst}): $x(t)\rightarrow 0$ as $t\rightarrow +\infty$ with $w(t)=0$, for all parameter trajectories
    $\rho\in\mathscr{P}$, for all delays $(h,d)\in\mathscr{H}\times\mathscr{D}_\delta$ and for all
    $\lambda\in\Lambda$.
    \item Provides a guaranteed $\mathcal{L}_2$ performance attenuation gain from $w$ to $z$ satisfying $||z||_{\mathcal{L}_2}<\gamma||w||_{\mathcal{L}_2}$ with $x(\eta)=0,\
    \eta\in[-h_M,0]$, $w(t)\ne0$ and for all parameter trajectories
    $\rho\in\mathscr{P}$, for all delays $(h,d)\in\mathscr{H}\times\mathscr{D}_\delta$ and for all
    $\lambda\in\Lambda$.
  \end{enumerate}
\end{problem}
%
%Problem \ref{pb:1} can be viewed as a 'nominal' problem where it is
%assumed that either the delay is unknown or the delay is perfectly
%known. In the former case, the control law reduces to
%$u(t)=K_0(\rho)x(t)$ while the latter one considers control laws
%$u(t)=K_0(\rho)x(t)+K_h(\rho)x(t-h(t))$. However, controllers
%involving an exact delay-value are quite difficult to implement due
%to the strong assumption of exact delay knowledge. Moreover, it is a
%well known fact that the estimation of delays is a very difficult
%problem \citep{Drakunov:06, Belkoura:08}. Hence, the control law
%(\ref{eq:sf}) with exact delay value is not acceptable from a
%practical point of view.
%
%The second problem tries to remedy this problem by allowing for a
%difference between the values of implemented delay and the system
%delay. Moreover, the stability and $\mathcal{L}_2$ input/output
%performance are guaranteed for variations of the implemented delay
%within given bounds around the actual delay. This approach makes the
%implementation of control laws of the form (\ref{eq:sf}) possible
%since, for the first time, resilience with respect to delay
%uncertainty is provided in terms of LMIs.
%%It is important to note
%%that robustness with respect to delay uncertainty was generally made
%%after the synthesis \citep{VerriestSP:02, Sename:06} in LMI-based
%%approaches. Moreover the approach provided in this paper can be
%%applied for constant delays as well as for time-varying delays, LTIV
%%systems, etc\dots

%
\section{A Control Oriented Delay-Dependent Stability Result}\label{sec:stab}
This section is devoted to the stability analysis of LPV time-delay
systems of the form (\ref{eq:syst}). The results are based on the
extension of \citep{Han:05, Gouaisbaut:06} to uncertain LPV
time-delay systems with time-varying delays. As we shall see later,
the immediate LMI conditions derived from the Lyapunov-Krasovskii
theorem are not well suited for stability and synthesis problems due
to:
\begin{itemize}
  \item multiple products between system
matrices and decision variables; and
  \item quadratic terms on the polytope
variable $\lambda$.
\end{itemize}
The second part of the proof is then devoted to the relaxation of
such a result in order to both reduce the number of these products (to
one) and make the dependence on the polytope variable $\lambda$
affine. This makes the derivation of stabilization result easier and
overall more efficient than approaches based on relaxations made
after substitution of the closed-loop system matrices into the LMI.

Another approach based on the
projection lemma \citep{Apkarian:95a} and on adjoint systems
\citep{Bensoussan:06} was considered in \citep{Briat:08c}. This
approach led to an equivalent problem independent of the
controller matrices involving a nonlinear matrix inequality which was
then solved using an iterative LMI algorithm. The current
approach avoids such a computational complexity by tolerating an increase of conservatism. So, following this new idea, the following stability analysis result is stated:
\begin{theorem}\label{th:th1ter}
  System (\ref{eq:syst}) with no control input (ie. $u(t)\equiv0$) is robustly asymptotically
   stable for all
  $(h,\rho,\lambda)\in\mathscr{H}\times \mathscr{P}\times\Lambda$ if there
  exist continuously differentiable matrix functions
  $P_i:U_\rho\rightarrow \mathbb{S}^{n}_{++}$ for all $i\in\{1,\ldots,N\}$, a matrix function $X:U_\rho\to\mathbb{R}^{n\times n}$, $N$ constant matrices
  $Q_i,R_i\in\mathbb{S}^{n}_{++}$ and a scalar $\gamma>0$ such that the parameter dependent
  LMIs
    \begin{equation}\label{eq:th1ter1}
        \Theta_i=\begin{bmatrix}
            -X(\rho)^\mathcal{S} & \Phi_{12i} & \Phi_{13i} & \Phi_{14i} & 0 & X(\rho)^T & h_MR_i\\
            \star & \Phi_{22i} & R_i & 0 & C_i(\rho)^{T} & 0 & 0\\
            \star  & \star & \Phi_{33i} & 0 & C_{hi}(\rho)^{T} & 0 & 0\\
            \star& \star & \star & -\gamma I_p & F_i(\rho)^T & 0 & 0\\
            \star & \star & \star & \star & -\gamma I_q & 0 & 0\\
            \star & \star & \star & \star & \star & -P_i(\rho) & -h_MR_i\\
            \star & \star & \star & \star & \star & \star & -R_i\\
        \end{bmatrix}\prec0
    \end{equation}
hold for all $(\rho,\nu,i)\in U_\rho\times
co\{U_\nu\}\times\{1,\ldots,N\}$ with
\begin{equation*}
  \Phi_{22i}=\dfrac{\partial
  P_i(\rho)}{\partial\rho}\nu-P_i(\rho)+Q_i-R_i
\end{equation*}
\begin{equation*}
\begin{array}{lclclcl}
    \Phi_{12i}&=&P_i(\rho)+X(\rho)^TA_i(\rho)&& \Phi_{13i}&=&X(\rho)^TA_{hi}(\rho)\\
    \Phi_{14i}&=&X(\rho)^TE_i(\rho) && \Phi_{33i}&=&-(1-\mu)Q_i-R_i
    \end{array}
\end{equation*}
In such a case, system (\ref{eq:syst}) satisfies
\Lgain{z}{w}{\gamma} for all $(\rho,h,\lambda)\in\mathscr{P}\times
\mathscr{H}\times\Lambda$.
\end{theorem}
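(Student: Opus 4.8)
The plan is to invoke the Lyapunov--Krasovskii theorem with a parameter-dependent functional and then to relax the resulting inequality by a common slack variable, following the two-part programme announced before the statement. First I would take the functional $V=x^TP(\lambda,\rho)x+\int_{t-h(t)}^{t}x(s)^TQ(\lambda)x(s)\,ds+h_M\int_{-h_M}^{0}\int_{t+\theta}^{t}\dot x(s)^TR(\lambda)\dot x(s)\,ds\,d\theta$, with $P(\lambda,\rho)=\sum_i\lambda_iP_i(\rho)$, $Q(\lambda)=\sum_i\lambda_iQ_i$ and $R(\lambda)=\sum_i\lambda_iR_i$. Since $P_i,Q_i,R_i\succ0$, this $V$ is a legitimate functional lower-bounded by $c\norm{x(t)}^2$. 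Differentiating along (\ref{eq:syst}) with $u\equiv0$ gives $\dot V=2x^TP\dot x+x^T\dot Px+x^TQx-(1-\dot h)x_h^TQx_h+h_M^2\dot x^TR\dot x-h_M\int_{t-h_M}^{t}\dot x(s)^TR\dot x(s)\,ds$, where $\dot P=\sum_i\lambda_i\tfrac{\partial P_i}{\partial\rho}\dot\rho$ because $\lambda$ is time-invariant.

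Next I would dispose of the two delay-dependent terms. As $\dot h\le\mu<1$ and $Q\succ0$, the delayed state term is majorised by $-(1-\mu)x_h^TQx_h$, which produces the $-(1-\mu)Q_i$ in $\Phi_{33i}$. For the distributed term I would first enlarge the interval, $-h_M\int_{t-h_M}^{t}\le-h_M\int_{t-h(t)}^{t}$ (valid since the integrand is nonnegative and $h(t)\le h_M$), and then apply Jensen's inequality on $[t-h(t),t]$ together with $h_M/h(t)\ge1$ to get $-h_M\int_{t-h_M}^{t}\dot x^TR\dot x\,ds\le-(x-x_h)^TR(x-x_h)$. This single bound supplies the $-R_i$ on the diagonal blocks of $\Phi_{22i},\Phi_{33i}$ and the off-diagonal $R_i$ coupling $x$ and $x_h$. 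Appending the performance term $\tfrac1\gamma z^Tz-\gamma w^Tw$, whose nonnegativity integrated from zero initial data yields \Lgain{z}{w}{\gamma}, and collecting everything as a quadratic form in $\zeta=(\dot x,x,x_h,w)$ produces a first ``immediate'' LMI. That LMI, however, contains the products $PA_i,PA_{hi},PE_i$ between the Lyapunov matrix and the system matrices and, after substituting $P=\sum_i\lambda_iP_i$, is quadratic in $\lambda$; these are exactly the two defects flagged in the text.

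The relaxation then proceeds by keeping $\dot x$ as an independent coordinate and enforcing the dynamics through the null term $2\dot x^TX(\rho)^T(A_ix+A_{hi}x_h+E_iw-\dot x)=0$ with a single slack variable $X(\rho)$ (Finsler's lemma / descriptor redundant equation). This replaces every product $P(\cdot)$ by $X^T(\cdot)$, so only one family of products survives; and because $X(\rho)$ is common to all vertices, the term $X^T\sum_i\lambda_iA_i$ is affine in $\lambda$, while $P_i$ now enters only via $\Phi_{12i}=P_i+X^TA_i$ and the diagonal block $-P_i$. Schur-complementing the output block reproduces rows/columns $4$--$5$ and the $\tfrac1\gamma z^Tz$ term, whereas the two trailing blocks carrying $-P_i$, $-R_i$ and the coupling $\pm h_MR_i$ reconstruct, on back-substitution, both the $h_M^2\dot x^TR\dot x$ term and the slack-to-Lyapunov link $X^TP_i^{-1}X\succeq X^\mathcal{S}-P_i$. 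The outcome is $\Theta_i\prec0$, now affine in $\lambda$ and linear in $\nu=\dot\rho$, so $\sum_i\lambda_i\Theta_i\prec0$ for every $\lambda\in\Lambda$ follows from the $N$ vertex inequalities, and linearity in $\nu$ lets one check only the vertices $U_\nu$ of $co\{U_\nu\}$. Setting $w=0$ and deleting rows/columns $4$--$5$ leaves a negative-definite principal submatrix, i.e. $\dot V<0$, giving robust asymptotic stability; the full inequality gives the $\mathcal{L}_2$ gain.

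The step I expect to be delicate is the slack-variable relaxation: one must check that introducing the common, vertex-independent $X(\rho)$ is sound (its nonsingularity is forced by $-X^\mathcal{S}\prec0$, which legitimises all the Schur complements) and, more importantly, that passing from the quadratic-in-$\lambda$ ``immediate'' condition to the per-vertex $\Theta_i\prec0$ is only \emph{sufficient}, so I would present it as a relaxation rather than an equivalence. The other point requiring care is tracking the time-varying delay through the two-step bound on the distributed integral --- enlargement to $[t-h_M,t]$ followed by Jensen on $[t-h(t),t]$ --- so that no $x(t-h_M)$ survives and only $x$ and $x_h$ remain, which is where the sign conditions and the inequalities $h(t)\le h_M$, $\dot h\le\mu$ must be used correctly.
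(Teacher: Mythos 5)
Your first half is exactly the paper's argument: the same Lyapunov--Krasovskii functional, the bound $-(1-\dot h)\le-(1-\mu)$, Jensen's inequality on the distributed term, and the supply rate $\gamma w^Tw-\gamma^{-1}z^Tz$, leading to the products-laden condition (\ref{eq:th1}). The gap is in the relaxation step. The Finsler/descriptor device you invoke --- adding $2\dot x^TX(\rho)^T\bigl(Ax+A_hx_h+Ew-\dot x\bigr)$, which vanishes along trajectories --- does yield a valid slack-variable LMI, but \emph{not} the matrix $\Theta_i$ of the statement: it yields the condition whose second diagonal block is $\frac{\partial P_i}{\partial\rho}\nu+Q_i-R_i$ and whose only trailing column is $\mathrm{col}(h_MR_i,0,\dots,0)$ with diagonal $-R_i$ (the Schur column reconstituting $h_M^2\dot x^TR\dot x$). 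The stated $\Theta_i$ differs in two essential ways: $\Phi_{22i}$ carries an extra $-P_i(\rho)$, and there is an entire additional block column $\mathrm{col}(X(\rho)^T,0,0,0,0,-P_i(\rho))$ with the coupling $-h_MR_i$ in position $(6,7)$. This difference is fatal to your argument as written: the extra $-P_i$ makes the principal part of $\Theta_i$ \emph{weaker} than your descriptor LMI, so if you run your trajectory argument verbatim on $\Theta_i$ (set the sixth-block component of the test vector to zero, the only choice for which your added term vanishes along trajectories, then back-substitute), you obtain only $\dot H<x^TP(\lambda,\rho)x$ rather than $\dot H<0$, and the proof does not close. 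The inequality $X^TP_i^{-1}X\succeq X^\mathcal{S}-P_i$ you cite cannot absorb the leftover $x^TPx$; it runs in the wrong direction (it is the standard tool for the tightness/existence direction, e.g.\ showing that $X=P$ recovers the unrelaxed condition, not for sufficiency).

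What is actually needed for this $\Theta_i$ --- and what the paper does --- is the projection lemma applied to the decomposition $\Theta=\left.\Theta\right|_{X=0}+Z_1^TXZ_2+Z_2^TX^TZ_1$ with $Z_2=[\,I\ 0\ \cdots\ 0\,]$ and $Z_1=[\,-I\ A\ A_h\ E\ 0\ I\ 0\,]$, where the identity in the sixth block is essential (it is missing from the $Z_1$ displayed in the paper, but without it the $(1,6)$ entry $X(\rho)^T$ of $\Theta$ is not generated). Then $\Theta\prec0$ forces $\zeta^T\left.\Theta\right|_{X=0}\zeta<0$ on $\ker Z_1$, i.e.\ on vectors satisfying $\dot x=Ax+A_hx_h+Ew+\pi$ with $\pi$ the sixth-block variable. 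Under this substitution the cross terms $2h_M\pi^TRr$ cancel against the $(6,7)$ entry $-h_MR_i$, so $\pi$ couples to the state only through $P_i$, and the Schur complement of its diagonal $-P_i$ regenerates exactly the $+P_i$ that cancels the $-P_i$ hidden in $\Phi_{22i}$; what remains is precisely (\ref{eq:th1}), whence $\dot H<0$ and the $\mathcal{L}_2$ bound. In short, your route proves a correct theorem, but a different one: the $-P_i$ shift and the sixth block column of the stated $\Theta_i$ cannot be produced by a term that vanishes along trajectories, and justifying them is the actual content of the paper's relaxation step.
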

\begin{proof}{}
The choice of the functional $V$ is inspired from \citep{Gouaisbaut:06, Han:05} and
extended to the case of LPV time-delay systems with time-varying
delays as in \cite{Briat:08c} and the supply rate $s(w,z)$ is considered:
\begin{equation}\label{eq:LKF}
\begin{array}{rcl}
    V&=&\displaystyle{x^T(t)P(\lambda,\rho)x(t)+\int_{t-h(t)}^tx^T(\theta)Q(\lambda)x(\theta)d\theta}\\
    &&+h_M\displaystyle{\int_{-h_M}^0\int_{t+\theta}^tx^T(\eta)R(\lambda)x(\eta)d\eta
    d\theta}\\
    s(w(t),z(t))&=&\gamma w^T(t)w(t)-\gamma^{-1} z(t)^Tz(t)
\end{array}
\end{equation}
with $P(\lambda,\rho)=\sum_{i=1}^N\lambda_iP_i(\rho)$,
$Q(\lambda)=\sum_{i=1}^N\lambda_iQ_i$,
$R(\lambda)=\sum_{i=1}^N\lambda_iR_i$. The supply-rate $s(w,z)$
characterizes the $\mathcal{L}_2$-gain from $w$ to $z$. Define the function $H$ to be
\begin{equation}
  H=V-\int_0^ts(w(\theta),z(\theta))d\theta
\end{equation}
%
%If $\dot{H}$, computed along the trajectories solutions of system
%(\ref{eq:syst}), is negative definite for all $\lambda\in\Lambda$,
%$\rho\in\mathscr{P}$, $h(t)\in[0,h_M]$ and $\dot{h}(t)\le\mu$ then according to the dissipativity theory
%\cite{Willems:72, Scherer:05a}, system (\ref{eq:syst}) is robustly
%asymptotically stable and satisfies $||z||_{\mathcal{L}_2}<\gamma
%||w||_{\mathcal{L}_2}$.
%
The derivative of the function can be bounded from above by
\begin{equation}
  \begin{array}{lcl}
  \dot{H} &\le&x^T(t)\displaystyle{\partial  P(\lambda,\rho)\over \partial
  \rho}\dot{\rho}(t)x(t)+x(t)^T\left[A^T(\lambda,\rho)P(\lambda,\rho)\right]^{\mathcal{S}}x(t)\\
  &&+2x_h^T(t)A_{h}(\lambda,\rho)^TP(\lambda,\rho)x(t)+2w^T(t)E(\lambda,\rho)^TP(\lambda,\rho)x(t)\\
  &&+x(t)^TQ(\lambda)x(t)-(1-\dot{h})x_h(t)^TQ(\lambda)x_h(t)\\
  &&+h_M^2\dot{x}(t)^TR(\lambda)\dot{x}(t)+\mathcal{I}-\gamma w(t)^Tw(t)\\
  &&+\gamma^{-1}z(t)^Tz(t)\\
  z(t)&=&C(\lambda,\rho)x(t)+C_h(\lambda,\rho)x(t-h(t))+F(\lambda,\rho)w(t)\\
  \mathcal{I}&=&-h_M\displaystyle{\int_{t-h(t)}^t\dot{x}(\theta)^TR\dot{x}(\theta)d\theta}
  \end{array}
\end{equation}
Note that $ -(1-\dot{h})\le -(1-\mu)$ and using Jensen's inequality
\citep{GuKC:03} on $\mathcal{I}$ we obtain
\begin{equation}
    \mathcal{I} \le
    -\left(\int_{t-h(t)}^t\dot{x}(s)ds\right)^TR(\lambda)(\star)^T\label{eq:I}
\end{equation}
Then expanding the expression of $s(w,z)$ we get LMI (\ref{eq:th1}) after Schur complements.
%
%Then expanding the expression of $s(w,z)$ and performing two
%successive Schur complements w.r.t. the terms
%$$\left[C(\lambda,\rho)x(t)+C_h(\lambda,\rho)x_h(t)+E(\lambda,\rho)w(t)\right]^T\gamma^{-1}\left[\star\right]$$
%and
%$$\left[A(\lambda,\rho)x(t)+A_h(\lambda,\rho)x(t-h(t))+F(\lambda,\rho)w(t)\right]^T(h_M^2R(\lambda))\left[\star\right]$$
%we get LMI (\ref{eq:th1}).
%
\begin{figure*}
\begin{equation}\label{eq:th1}
  \begin{bmatrix}
    [A(\lambda,\rho)^TP(\lambda,\rho)]^\mathcal{S}+Q(\lambda)-R(\lambda) &  P(\lambda,\rho)A(\lambda,\rho)+R(\lambda) & P(\lambda,\rho)E(\lambda,\rho) &  C(\lambda,\rho)^T & h_MA(\lambda,\rho)^TR(\lambda)\\
    \star & -(1-\mu)Q(\lambda)-R(\lambda) & 0 & C_h(\lambda,\rho)^T  & h_MA_h(\lambda,\rho)^TR(\lambda)\\
    \star & \star & -\gamma I_p & F(\lambda,\rho)^T &
    h_ME(\lambda,\rho)^TR(\lambda)\\
    \star & \star & \star & -\gamma I_q & 0\\
        \star & \star & \star & \star & -R(\lambda)\\
  \end{bmatrix}\prec0
\end{equation}
\myline
\end{figure*}
Due to products between decision matrices and system matrices, it is
clear that the LMI is not linear in $\lambda$ (e.g.
$A(\lambda,\rho)P(\lambda,\rho)$). Moreover, the structure of
(\ref{eq:th1}) is not adapted to the controller design due to the
presence of multiple products terms $PA,PA_h,RA$ and $RA_h$
preventing to find a linearizing change of variable even after
congruence transformations.  A relaxed version of (\ref{eq:th1}) is
then expected to remove the multiple products preventing the change
of variables and limit the increase of conservatism. A similar
approach has been used in \citep{Tuan:01, Tuan:03, Briat:08phd}.

In view of relaxing the latter result, define $\Theta=\sum_{i=1}^N\lambda_i\Theta_i$ where $\Theta_i$ is given in
(\ref{eq:th1ter1}). Below, it is proved that $\Theta\prec0$ implies the
feasibility of (\ref{eq:th1}). First note that $\Theta$
can be written as (where the dependency on $\lambda,\rho$ and
$\dot{\rho}$ are dropped for clarity):
\begin{equation}\label{eq:djsqljdl12313}
  \left.\Theta\right|_{X=0}+Z_1^TXZ_2+Z_2^TX^TZ_1\prec0
\end{equation}
where $Z_1=\begin{bmatrix}   -I & A & A_h & E & 0 & 0
\end{bmatrix}$ and $Z_2=\begin{bmatrix}   I & 0 & 0 & 0 & 0 & 0
\end{bmatrix}$. Then invoking the projection lemma
\citep{Gahinet:94a}, the feasibility of $\Theta\prec0$ implies the feasibility of the underlying LMI problem
\begin{subequations}\label{eq:djsqljdl123132}
  \begin{gather}
    \mathcal{N}_1^T\left.\Theta\right|_{X=0}\mathcal{N}_1\prec0\label{eq:djsqljdl123132_1}\\
    \mathcal{N}_2^T\left.\Theta\right|_{X=0}\mathcal{N}_2\prec0\label{eq:djsqljdl123132_2}
  \end{gather}
\end{subequations}
where $\mathcal{N}_1$ and $\mathcal{N}_2$ are basis
of the null-space of $Z_1$ and $Z_2$ respectively. Note that since $X$ only depends on $\rho$ (and neither on $\dot{\rho}$ nor $\lambda$) then equivalence equivalence between (\ref{eq:djsqljdl12313}) and (\ref{eq:djsqljdl123132}) is lost and reduces then to an implication from (\ref{eq:djsqljdl12313}) to (\ref{eq:djsqljdl123132}) only.

After some tedious manipulations, it
is possible to show that LMI (\ref{eq:djsqljdl123132_1}) is equivalent to
(\ref{eq:th1}) and thus that $\Theta\prec0$ implies (\ref{eq:th1}).
Thus the conservatism of the
approach is characterized by LMI (\ref{eq:djsqljdl123132_2}) and by the sole dependence of $X$ on $\rho$.
%
%Finally, using standard arguments on convexity to remove the terms
%$\lambda_i$, we obtain Theorem \ref{th:th1ter}.
\end{proof}

LMIs (\ref{eq:th1ter1}) do not involve any multiple products and hence can be easily used for design purpose. The removal of
multiple products has been allowed through the introduction of a
'slack' variable $X(\rho)$. The additional conservatism and the
(slight) increase of the computational complexity are the price to
pay to get easily tractable conditions for the stabilization
problem.%

\section{Delay-Dependent Stabilization by State-Feedback}\label{sec:sf}

%This section is devoted to the control of time-delay systems of the
%form (\ref{eq:syst})-(\ref{eq:systpol}) with a state-feedback
%controller of general form (\ref{eq:sf}). The three types of
%controllers will be considered in this section.

%Let us recall that when the delay is unknown, the particular case of
%controller with $K_h(\cdot)=0$ will be considered and whenever it is
%exactly known then the delay of the controller $d(t)$ will be
%considered as identical to $h(t)$. However, due to practical
%difficulties to have a thorough knowledge of the delay value, the
%case $d(t)=h(t)+\eta(t)$ with $|\eta(t)|\le\delta$ is more relevant
%in practice and will be addressed in the second part of the section.

\subsection{Robust Stabilization using Memoryless and Exact-Memory State-Feedback
Controllers}\label{sec:nom}

In this part, stabilizing control laws of the form
\begin{equation}\label{eq:sf1}
  u(t)=K_0(\rho)x(t)+K_h(\rho)x(t-h(t))
\end{equation}
are sought. The closed loop system obtained from the interconnection of system
(\ref{eq:syst}) and controller (\ref{eq:sf1}) is given by
\begin{equation}\label{eq:cl1}
\begin{array}{lcl}
      \dot{x}(t) &=& A_{cl}(\lambda,\rho)x(t)+A_{hcl}(\lambda,\rho)x_h(t)+E(\lambda,\rho)w(t)\\
      z(t)&=& C_{cl}(\lambda,\rho)x(t)+C_{hcl}(\lambda,\rho)x_h(t)+F(\lambda,\rho)w(t)
\end{array}
\end{equation}
with
$A_{cl}(\lambda,\rho)=A(\lambda,\rho)+B(\lambda,\rho)K_0(\rho)$,
$C_{cl}(\lambda,\rho)=C(\lambda,\rho)+D(\lambda,\rho)K_0(\rho)$,
$A_{hcl}(\lambda,\rho)=A_h(\lambda,\rho)+B(\lambda,\rho)K_h(\rho)$
and
$C_{hcl}(\lambda,\rho)=C_h(\lambda,\rho)+D(\lambda,\rho)K_h(\rho)$.
The following theorem on robust stabilization is obtained:
\begin{theorem}\label{th:th2_1}
There exists a parameter dependent state-feedback control of the
form (\ref{eq:sf1}) which robustly asymptotically stabilizes system
(\ref{eq:syst}) for all $(h,\rho,\lambda)\in\mathscr{H}\times
\mathscr{P}\times\Lambda$ if there exist continuously differentiable
matrix functions $P_i:U_\rho\rightarrow \mathbb{S}^{n}_{++}$,
constant matrices $Q_i,R_i\in\mathbb{S}^{n}_{++}$ for
$i=1,\ldots,N$, $X\in\mathbb{R}^{n\times n}$, matrix functions
$Y_0,Y_h:U_\rho\rightarrow \mathbb{R}^{m\times n}$ and a scalar
$\gamma>0$ such that the parameter dependent LMIs
        \begin{equation}\label{eq:th2_1}
        \begin{bmatrix}
            \Xi_{11i} & \Xi_{12i} & \Xi_{13i} & E_i(\rho) & 0 & X & h_MR_i\\
            \star & \Xi_{22i} & R & 0 & \Xi_{24i} & 0 & 0\\
            \star  & \star & \Xi_{33i} & 0 & \Xi_{34i} & 0 & 0\\
            \star &  \star & \star & -\gamma I_p & F_i(\rho)^T & 0 & 0\\
            \star & \star & \star & \star & -\gamma I_q & 0 & 0\\
            \star & \star & \star & \star & \star & -P_i(\rho) & -h_MR_i\\
            \star & \star & \star & \star & \star & \star & -R_i\\
        \end{bmatrix}\prec0
    \end{equation}
hold for all $(\rho,\nu,i)\in U_\rho\times
co\{U_\nu\}\times\{1,\ldots,N\}$ where
\begin{equation*}
  \begin{array}{lclclcl}
    \Xi_{11}&=&-X^\mathcal{S}\\
    \Xi_{12i}&=&P_i(\rho)+A_i(\rho)X+B_i(\rho)Y_0(\rho)\\
    \Xi_{23i}&=&A_{hi}(\rho)X+B_i(\rho)Y_h(\rho)\\
    \Xi_{22i}&=&\dfrac{\partial P_i(\rho)}{\partial\rho}\nu-P_i(\rho)+Q_i-R_i\\
    \Xi_{33i}&=&-(1-\mu)Q_i-R_i\\
    \Xi_{24i}&=&[C_i(\rho)X+D_i(\rho)Y_0(\rho)]^T\\
    \Xi_{34i}&=&[C_{hi}(\rho)X+D_i(\rho)Y_h(\rho)]^T\\
  \end{array}
\end{equation*}
In such a case, a suitable control law is given by (\ref{eq:sf1})
with gains $K_0(\rho)=Y_0(\rho)X^{-1}$ and
$K_h(\rho)=Y_h(\rho)X^{-1}$. Moreover, the closed-loop system
satisfies \Lgain{z}{w}{\gamma} for all
$(h,\rho,\lambda)\in\mathscr{H}\times \mathscr{P}\times\Lambda$.
\end{theorem}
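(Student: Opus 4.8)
The plan is to reduce the stabilization problem to the stability test of Theorem~\ref{th:th1ter} applied to the closed-loop system (\ref{eq:cl1}), and then to convexify the resulting conditions by a congruence transformation together with a linearizing change of variables.

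First I would insert the closed-loop data $A_{cl,i},A_{hcl,i},C_{cl,i},C_{hcl,i}$ into the stability LMI (\ref{eq:th1ter1}). Because the gains $K_0(\rho),K_h(\rho)$ do not depend on $\lambda$ and $\sum_i\lambda_i=1$, the polytopic structure (\ref{eq:systpol}) is preserved, $A_{cl}(\lambda,\rho)=\sum_i\lambda_iA_{cl,i}(\rho)$ and similarly for the other matrices, so Theorem~\ref{th:th1ter} applies verbatim to (\ref{eq:cl1}) and guarantees robust asymptotic stability together with \Lgain{z}{w}{\gamma}. The price is that the $(1,2)$ and $(1,3)$ blocks now read $\Pi_i+X(\rho)^T(A_i+B_iK_0)$ and $X(\rho)^T(A_{hi}+B_iK_h)$ (with $\Pi_i$ the closed-loop Lyapunov variable), so the conditions are bilinear in the slack $X(\rho)$ and in the gains.

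Next I would eliminate this bilinearity. The key step is to restrict the slack variable to a \emph{constant}, $\rho$-independent invertible matrix $S$, and then apply the congruence transformation $\mathcal{T}=\diag(S^{-1},S^{-1},S^{-1},I_p,I_q,S^{-1},S^{-1})$ to the closed-loop LMI. Setting $X:=S^{-1}$, rescaling $P_i:=X^T\Pi_iX$, $Q_i:=X^T\mathcal{Q}_iX$, $R_i:=X^T\mathcal{R}_iX$ (all still in $\mathbb{S}^n_{++}$ since $X$ is invertible), and introducing $Y_0:=K_0X$, $Y_h:=K_hX$, a block-by-block computation turns $S^T(A_i+B_iK_0)$ into $A_iX+B_iY_0$ and $S^T(A_{hi}+B_iK_h)$ into $A_{hi}X+B_iY_h$, collapses the $(1,4)$ block $S^TE_i$ to $E_i$ and the $(1,6)$ block $S^T$ to $X$, converts the output blocks into $[C_iX+D_iY_0]^T$ and $[C_{hi}X+D_iY_h]^T$, and leaves the diagonal blocks in exactly the forms $\Xi_{22i},\Xi_{33i}$; the outcome is precisely (\ref{eq:th2_1}). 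Since the congruence is invertible and the change of variables bijective, feasibility of (\ref{eq:th2_1}) is equivalent to feasibility of the closed-loop instance of (\ref{eq:th1ter1}) with gains $K_0=Y_0X^{-1}$, $K_h=Y_hX^{-1}$, whence Theorem~\ref{th:th1ter} yields the claim. The invertibility of $X$ needed to form $X^{-1}$ follows from the $(1,1)$ block $-X^\mathcal{S}\prec0$, a principal submatrix of a negative definite matrix, which forces $X+X^T\prec0$ and hence $\det X\neq0$.

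The delicate point, and the reason the construction is not free, is the constancy of $S$: only because $S$ is $\rho$-independent does $\partial\Pi_i/\partial\rho\,\nu$ transform as $\partial(X^T\Pi_iX)/\partial\rho\,\nu=\partial P_i/\partial\rho\,\nu$ under $\mathcal{T}$, with no spurious $\partial S/\partial\rho$ terms; a $\rho$-dependent slack would destroy the affine-in-$\dot\rho$ structure and make the change of variables nonlinear. This is exactly the source of conservatism already flagged after Theorem~\ref{th:th1ter}, and I expect the main work to be the routine but careful verification that, with the constant slack, every block maps to its $\Xi$-counterpart so that the congruence-transformed closed-loop LMI coincides with (\ref{eq:th2_1}).
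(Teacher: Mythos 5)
Your proposal is correct and takes essentially the same route as the paper's proof: substitute the closed-loop matrices into the stability LMI of Theorem~\ref{th:th1ter}, freeze the slack variable to a constant matrix precisely so that the $\frac{\partial P_i}{\partial\rho}\nu$ block transforms without spurious derivative terms, and then apply the congruence $\diag(I_3\otimes X^{-1},I_{p+q},I_2\otimes X^{-1})$ together with the linearizing change of variables $P_i\leftarrow X^{-T}P_iX^{-1}$, $Q_i\leftarrow X^{-T}Q_iX^{-1}$, $R_i\leftarrow X^{-T}R_iX^{-1}$, $Y_0\leftarrow K_0X^{-1}$, $Y_h\leftarrow K_hX^{-1}$. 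One trivial slip: $-X^\mathcal{S}\prec0$ forces $X+X^T\succ0$ rather than $X+X^T\prec0$, but definiteness of the symmetric part of either sign yields $\det X\neq0$, so your invertibility argument (a detail the paper leaves implicit) stands.
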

\begin{proof}{}
 Substitute the closed-loop system (\ref{eq:cl1}) into inequality
(\ref{eq:th1ter1}) and set $X$ to be a constant matrix. $X$ is enforced to be constant
in order to allow for the use of congruence
transformations, otherwise, nonlinear terms would appear, making the solution to the problem difficult to solve (i.e. non LMI). Then performing a congruence transformation with respect to matrix
$$\diag(I_3\otimes X^{-1},I_{p+q},I_2\otimes X^{-1})$$ and applying the
following linearizing change of variable
\begin{equation*}
  \begin{array}{lclclcl}
    X&\leftarrow&X^{-1} &       &  P_i& \leftarrow&X^{-T}P_iX^{-1}\\
    Q_i&\leftarrow&X^{-T}Q_iX^{-1} &   & R_i&\leftarrow&X^{-T}R_iX^{-1}\\
    Y_0&\leftarrow&K_0X^{-1} &  &   Y_h&\leftarrow&K_hX^{-1}
  \end{array}
\end{equation*}
yields LMIs (\ref{eq:th2_1}).
\end{proof}
\begin{remark}
The latter result can be used to both synthesize exact-memory and
memoryless control laws. Memoryless structures can be obtained
setting $Y_h(\cdot)=0$.
\end{remark}
%It is important to mention that these results provide a solution to
%both memoryless and with memory state-feedback controllers.
%Memoryless controller can be designed by setting $Y_h(\cdot)=0$ in
%the LMI condition. This solution is preferred when no information is
%available in real time on the delay value.
%
%On the other hand, it is clear from the expression of the control
%law (\ref{eq:sf1}) and the open-loop system (\ref{eq:syst}) that the
%exact knowledge of the delay value is crucial for the validity of
%the control law. Moreover, the estimation or measurement of the
%delay in a state-delayed system is difficult and hence control law
%(\ref{eq:cl1}) may not be practically valid \citep{Apkarian:98a}.
%
%
\subsection{Robust Stabilization using $\delta$-Memory Resilient State-Feedback Controllers}
This part establishes a new result on the stabilization of time-delay systems where the strong constraint on exact delay knowledge is relaxed. In the following, we will show that it is also possible to give stabilization conditions even in presence on time-varying uncertainties on the delay knowledge. We will see, as a direct consequence of the method, that Theorem \ref{th:th2_1} can be retrieved from Theorem \ref{th:th3_2}, the main result of this section.

In the sequel, the following control law will be considered:
\begin{equation}\label{eq:sf2}
  u(t)=K_0(\rho)x(t)+K_h(\rho)x(t-d(t))
\end{equation}
where the approximate value of the delay $d(t)\in\mathscr{D}_\delta$
is used. To the authors' knowledge, the only work on such control laws using LMI techniques is \citep{Ivanescu:00}. However, the provided approach only considers constant time-delays and does not consider any relationship between the delays. In the current approach, time-varying delays are allowed and the delay knowledge maximal error is explicitly taken into account in the stabilization conditions. Moreover, with the provided approach, it is easy to guarantee a given bound on the error or optimize it, at the difference of \cite{VerriestSP:02,
Sename:06}.

The closed-loop system given by the interconnection of the control
law (\ref{eq:sf2}) and system (\ref{eq:syst}) is governed by the
expressions:
\begin{equation}\label{eq:cl2}
  \begin{array}{lcl}
    \dot{x}(t)&=&A_{cl}(\lambda,\rho)x(t)+A_h(\lambda,\rho)x_h(t)\\
    &&+B(\lambda,\rho)K_h(\rho)x_d(t)+E(\lambda,\rho)w(t)\\
    z(t)&=&C_{cl}(\lambda,\rho)x(t)+C_h(\lambda,\rho)x_h(t)\\
    &&+D(\lambda,\rho)K_h(\rho)x_d(t)+F(\lambda,\rho)w(t)
  \end{array}
\end{equation}
where $x_h(t)=x(t-h(t))$, $x_d(t)=x(t-d(t))$,
$A_{cl}(\lambda,\rho)=A(\lambda,\rho)+B(\lambda,\rho)K_0(\rho)$ and
$C_{cl}(\lambda,\rho)=C(\lambda,\rho)+D(\lambda,\rho)K_0(\rho)$.
It is worth noting that this system is not a classical system with
two delays. Indeed, the difficulty lies in the fact that both delays
satisfy an algebraic inequality which constrains the trajectories of
$d(t)$ to evolve within a ball, of radius $\delta$, centered around the trajectory of
$h(t)$. This additional information has to be taken into account for an efficient stability and performance analysis of the system
(\ref{eq:cl2}). To this aim, the following preliminary result from \cite{Shustin:07} is used:
\begin{proposition}
Let us define the operator $\Delta(\cdot)$ as
  \begin{equation}
  \begin{array}{lcl}
        \Delta(z_0(t))&=&\dfrac{2}{\sqrt{7}\delta}\displaystyle{\int_{t-d(t)}^{t-h(t)}z_0(\tau)d\tau}\quad \mathrm{with\ }(h,d)\in\mathscr{H}\times\mathscr{D}_\delta
  \end{array}
\end{equation}
For any input signal $\xi\in\mathcal{L}_2$, the output $\Delta(\xi)$ is also in $\mathcal{L}_2$ and we have
$||\Delta(\xi)||_{\mathcal{L}_2}\le||\xi||_{\mathcal{L}_2}$.
\end{proposition}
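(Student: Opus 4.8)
The plan is to read the claim as the statement that the linear map $\Delta$ has $\mathcal{L}_2$-induced gain at most one, so I would establish the energy bound $||\Delta(\xi)||_{\mathcal{L}_2}^2\le||\xi||_{\mathcal{L}_2}^2$ and obtain $\mathcal{L}_2$-membership of $\Delta(\xi)$ as a free by-product of finiteness. First I would fix an admissible pair $(h,d)\in\mathscr{H}\times\mathscr{D}_\delta$ and work pointwise in $t$: the integration window $[t-d(t),t-h(t)]$ has length $|d(t)-h(t)|\le\delta$, so Jensen's (Cauchy--Schwarz) inequality --- the same device applied to $\mathcal{I}$ in the proof of Theorem~\ref{th:th1ter} --- gives
\[
|\Delta(\xi)(t)|^2=\frac{4}{7\delta^2}\Big|\int_{t-d(t)}^{t-h(t)}\xi(\tau)d\tau\Big|^2\le\frac{4}{7\delta^2}\,|d(t)-h(t)|\int_{W(t)}|\xi(\tau)|^2d\tau\le\frac{4}{7\delta}\int_{W(t)}|\xi(\tau)|^2d\tau,
\]
where $W(t)$ is the (unoriented) interval with endpoints $t-d(t)$ and $t-h(t)$.

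Next I would integrate this over $t\in\mathbb{R}_+$ and exchange the order of integration by Tonelli (all integrands are nonnegative), which rewrites $||\Delta(\xi)||_{\mathcal{L}_2}^2$ as $\tfrac{4}{7\delta}\int|\xi(\tau)|^2 m(\tau)d\tau$, with $m(\tau)$ the Lebesgue measure of the occupation set $\{t\ge0:\tau\in W(t)\}$. The decisive structural fact is that $h\in\mathscr{H}$ forces $\tfrac{d}{dt}(t-h(t))=1-\dot h\ge1-\mu>0$, so $t\mapsto t-h(t)$ is strictly increasing; since $\tau\in W(t)$ implies $|(t-h(t))-\tau|\le|d(t)-h(t)|\le\delta$, only the $t$ with $t-h(t)\in[\tau-\delta,\tau+\delta]$ can contribute, and strict monotonicity turns this $f$-range into a finite bound on $m(\tau)$. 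A uniform estimate $m(\tau)\le c\,\delta$ would then yield $||\Delta(\xi)||_{\mathcal{L}_2}^2\le\tfrac{4c}{7}||\xi||_{\mathcal{L}_2}^2$, so the entire game is to secure $c\le\tfrac74$.

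The hard part is precisely extracting this sharp constant, and I expect it to be the main obstacle. The naive occupation estimate, using only $|(t-h(t))-\tau|\le\delta$ together with $1-\dot h\ge1-\mu$, gives $m(\tau)\le\tfrac{2\delta}{1-\mu}$ and hence a gain $\sqrt{8/\big(7(1-\mu)\big)}$, which already exceeds $1$ on the physically relevant range $\mu\in[0,1)$. The reason is that $e(t):=d(t)-h(t)$ may change sign, so $W(t)$ is a genuinely two-sided window of width up to $2\delta$ that lingers when $\dot h$ is near $1$; the factor $2/(1-\mu)$ is therefore unavoidable at the level of $|K|$, the modulus of the operator kernel, and a pointwise bound on $|K|$ cannot work. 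To reach $c=\tfrac74$ one must instead exploit the sign cancellation of the kernel $K(t,\tau)=\mathrm{sgn}(e(t))\,\mathbbm{1}[\tau\in W(t)]$: when $e$ oscillates, a fixed test function $\xi(\tau)$ cannot stay in phase with the $t$-dependent sign, so the contributions partially cancel. Concretely I would treat $||\Delta||$ as a bona fide operator norm and reduce the worst case to a one-parameter variational problem (optimising over the relative position of the window and the local energy distribution of $\xi$), whose optimum is what produces the value $7/4$; equivalently, this is the content of the sharp estimate of \cite{Shustin:07}, which I would invoke rather than re-derive. Once the gain bound is in hand, $\Delta(\xi)\in\mathcal{L}_2$ follows at once.
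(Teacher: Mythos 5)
Your proposal is, in substance, the same as the paper's treatment, because the paper contains \emph{no proof} of this proposition: it is introduced with the words ``the following preliminary result from \cite{Shustin:07} is used,'' and the bound $||\Delta||_{\mathcal{L}_2-\mathcal{L}_2}\le 1$ is simply quoted from that reference. Since your argument also terminates in an appeal to the sharp estimate of \cite{Shustin:07}, you end up exactly where the paper does, and on the same terms. What you add, and the paper does not, is a correct diagnosis of why no elementary self-contained argument can close the claim: the Cauchy--Schwarz/Tonelli/occupation-measure computation is sound, the resulting gain bound $\sqrt{8/(7(1-\mu))}$ indeed exceeds $1$ for every $\mu\in[0,1)$ (already $\sqrt{8/7}$ at $\mu=0$, where the naive constant is $2$ against the required $7/4$), and the deficit can only be recovered from the oriented, sign-changing structure of the kernel, since for fixed $\tau$ the windows with $t-h(t)<\tau$ and those with $t-h(t)>\tau$ contribute with opposite signs. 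One caveat, which your own computation makes visible and which applies equally to the paper's bare citation: the inequality of \cite{Shustin:07} is established for a \emph{constant} nominal delay $h$ with a fast-varying perturbation $|\tau(t)-h|\le\delta$, whereas the proposition here has a time-varying nominal $h(t)$ with $\dot h\le\mu<1$; your $1/(1-\mu)$ inflation of the occupation measure shows this difference is substantive rather than cosmetic, so ``invoke rather than re-derive'' leaves the extension of the $7/4$ constant from constant to time-varying $h(\cdot)$ unaddressed --- a gap you share with the paper, but a gap nonetheless if a complete proof is the goal.
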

Using this operator, we can
turn system (\ref{eq:cl2}) into an uncertain single-delay system
(i.e. with $d(t)$ only) depending explicitly on the delay error
bound $\delta$:
\begin{equation}\label{eq:cl3}
  \begin{array}{lcl}
    \dot{x}(t)&=&A_{cl}(\lambda,\rho)x(t)+A_{hcl}(\lambda,\rho)x_d(t)\\
    &&+E(\lambda,\rho)w(t)+\frac{\sqrt{7}}{2}\delta A_h(\lambda,\rho)w_0(t)\\
    z(t)&=&C_{cl}(\lambda,\rho)x(t)+C_{hcl}(\lambda,\rho)x_d(t)\\
    &&+F(\lambda,\rho)w(t)+\frac{\sqrt{7}}{2}\delta C_h(\lambda,\rho)x_d(t)\\
    z_0(t)&=&\dot{x}(t)\\
    w_0(t)&=&\Delta(z_0(t))
  \end{array}
\end{equation}
where $x_d(t)=x(t-d(t))$,
$A_{cl}(\lambda,\rho)=A(\lambda,\rho)+B(\lambda,\rho)K_0(\rho)$,
$A_{hcl}(\lambda,\rho)=A_h(\lambda,\rho)+B(\lambda,\rho)K_h(\rho)$,
$C_{cl}(\lambda,\rho)=C(\lambda,\rho)+D(\lambda,\rho)K_0(\rho)$ and
$C_{hcl}(\lambda,\rho)=C_h(\lambda,\rho)+D(\lambda,\rho)K_h(\rho)$.
%
%It is important to note that this uncertain single-delay system
%(\ref{eq:cl3}) where the uncertainty radius $\delta$ appears
%explicitly in the expression of the transformed system. Hence
%(\ref{eq:cl3}) is a comparison system for (\ref{eq:cl2}) in which
%the delays satisfy $|d(t)-h(t)|\le\delta$.

%\begin{remark}
%    The use of the operator $\Delta(\cdot)$ introduces additional dynamics in system (\ref{eq:cl3}) as
%shown in \citep{GuKC:03}, hence system (\ref{eq:cl3}) is not
%equivalent to system (\ref{eq:cl2}). The analysis of the
%conservatism is performed for constant time-delay only since the
%approach of \citep{GuKC:03} is based on frequency-domain methods. A Lyapunov-Krasovskii approach
%is also provided in \citep{Kharitonov:03}. The presence of additional dynamics emphasizes that
%the model (\ref{eq:cl3}) is a non-equivalent comparison model for
%(\ref{eq:cl2}), that is, the stability of (\ref{eq:cl3}) is a
%sufficient condition for the stability of (\ref{eq:cl2}) only.
%%
%%An important result states that the conservatism only depends on the
%%eigenvalues of the matrix $A_h$ and if the matrix $A_h$ is Hurwitz
%%then both systems are equivalent. In the stabilization problem, the
%%matrix acting on the delayed state is 'controlled': $A_h+BK_h$ and
%%hence can be made Hurwitz by an appropriate choice of the matrix
%%$K_h$ (assuming that the pair $(A_h,B)$ is stabilizable). Hence, the
%%impact of additional dynamics may be less critical in the
%%stabilization problem than in the stability analysis problem.
%\end{remark}
%
Finally, according to the previous discussion, the main result of the paper is given below:
\begin{theorem}\label{th:th3_2}
There exists a parameter dependent state-feedback control of the
form (\ref{eq:sf2}) which robustly asymptotically stabilizes system
(\ref{eq:systpol}) for all
$(h,d,\rho,\lambda)\in\mathscr{H}\times\mathscr{D}_\delta\times\mathscr{P}\times\Lambda$
if there exist continuously differentiable matrix functions
$P_i:U_\rho\rightarrow \mathbb{S}^{n}_{++}$, matrix functions $S_i:U_\rho\rightarrow \mathbb{S}^{n}_{++}$, constant matrices
$Q_i,R_i\in\mathbb{S}^{n}_{++}$, $X\in\mathbb{R}^{n\times n}$,
$Y_0,Y_h:U_\rho\rightarrow \mathbb{R}^{m\times n}$ and a scalar
$\gamma>0$ such that the parameter dependent LMIs
\begin{equation}\label{eq:rtyuio31}
  \begin{bmatrix}
    \Omega_{1i} & \Omega_{2i}\\
    \star & \Omega_{3i}
    \end{bmatrix}\prec0
\end{equation}
hold for all $(\rho,\nu,i)\in U_\rho\times
co\{U_\nu\}\times\{1,\ldots,N\}$ and where
\begin{equation*}
\begin{array}{lcl}
  \Omega_{1i}&=&\begin{bmatrix}
    \Omega_{11i} &  \Omega_{12i} & \Omega_{13i} & \Omega_{14i}& E_i(\rho) & 0\\
    \star & \Omega_{22i} & R_i & \frac{\sqrt{7}}{2}\delta R_i & 0 & \Omega_{26i}\\
    \star & \star & \Omega_{33i} & \frac{\sqrt{7}}{2}\delta\Omega_{33i} & 0 & \Omega_{36i}\\
    \star & \star & \star & \Omega_{44i} & 0 & \frac{\sqrt{7}}{2}\delta C_{hi}(\rho)^T\\
    \star & \star & \star & \star & -\gamma I_p & F_i(\rho)^T\\
    \star & \star & \star & \star & \star & -\gamma I_q\\
  \end{bmatrix}\\
  \Omega_{2i}&=&\begin{bmatrix}
    X & S_i(\rho) & h_MR_i\\
     0 & 0 & 0\\
     0 & 0 & 0\\
     0 & 0 & 0\\
     0 & 0 & 0\\
     0 & 0 & 0
  \end{bmatrix}\\
  \Omega_{3i}&=&\begin{bmatrix}
    -P_i(\rho) & -S_i(\rho) & -h_MR_i\\
\star & -S_i(\rho) & 0\\
\star & \star & -R_i
  \end{bmatrix}\\
    \Omega_{11i}&=&-X^\mathcal{S}\\
    \Omega_{12i}&=&P_i(\rho)+A_i(\rho)X+B_i(\rho)Y_0(\rho)\\
    \Omega_{13i}&=&A_{hi}(\rho)X+B_i(\rho)Y_h(\rho)\\
    \Omega_{14i}&=&\dfrac{\sqrt{7}}{2}\delta A_{hi}(\rho)X\\
    \Omega_{22i}&=&\dfrac{\partial P_i}{\partial \rho}\nu-P_i(\rho)+Q_i-R_i\\
    \Omega_{33i}&=&-(1-\mu)Q_i-R_i\\
    \Omega_{26i}&=&(C_i(\rho)X+D_i(\rho)Y_0(\rho))^T\\
    \Omega_{36i}&=&(C_{hi}(\rho)X+D_i(\rho)Y_h(\rho))^T\\
    \Omega_{44i}&=&\frac{7}{4}\delta^2\Omega_{33i}-S_i(\rho)
    \end{array}
  \end{equation*}
In such a case, the controller gains can be computed using
    $K_0(\rho)=L_0(\rho)X^{-1}$ and $K_h(\rho)=L_h(\rho)X^{-1}$. Moreover the closed-loop system satisfies \Lgain{z}{w}{\gamma} for
all $(\rho,h,d,\lambda)\in\mathscr{P}\times
\mathscr{H}\times\mathscr{D}_\delta\times\Lambda$.
\end{theorem}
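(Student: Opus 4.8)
The plan is to exploit the reformulation already set up in (\ref{eq:cl3}): the two-delay closed loop (\ref{eq:cl2}) is rewritten as a single-delay system in the controller delay $d(t)$ alone, in which the mismatch between $x_h$ and $x_d$ is absorbed into the fictitious channel $w_0=\Delta(z_0)$, $z_0=\dot{x}$, using the identity $x_h=x_d+\tfrac{\sqrt{7}}{2}\delta\,w_0$ that defines $\Delta$ in the preliminary Proposition. Since that Proposition guarantees $||\Delta(\xi)||_{\mathcal{L}_2}\le||\xi||_{\mathcal{L}_2}$, the pair $(z_0,w_0)$ is a norm-bounded uncertainty closed around the nominal system (\ref{eq:cl3}). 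The strategy is then in three movements: first, derive for the augmented single-delay system a dissipation inequality that simultaneously certifies the $w\to z$ performance and a scaled small-gain condition on the $z_0\to w_0$ channel; second, turn it into a relaxed slack-variable LMI exactly as in the proof of Theorem \ref{th:th1ter}; third, linearise it by the congruence and change of variables of Theorem \ref{th:th2_1}, producing (\ref{eq:rtyuio31}).

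For the first movement I would reuse the functional $V$ of (\ref{eq:LKF}) with $h$ replaced by $d$ (inheriting the derivative bound $\dot{d}\le\mu$ needed for the $-(1-\mu)Q_i$ term) and impose, with a multiplier $S\succ0$,
\[
\dot V + z_0^TSz_0 - w_0^TSw_0 + \gamma^{-1}z^Tz - \gamma w^Tw \le 0 .
\]
Bounding the delay-integral remainder by Jensen's inequality as in (\ref{eq:I}) and observing that $z_0=\dot{x}$ merges the new quadratic $\dot{x}^TS\dot{x}$ with the existing $h_M^2\dot{x}^TR\dot{x}$ term, the surviving $-w_0^TSw_0$ populates the $w_0$-diagonal while $w_0$ enters the dynamics and the output only through $\tfrac{\sqrt{7}}{2}\delta A_h$ and $\tfrac{\sqrt{7}}{2}\delta C_h$; this explains the appearance of $\Omega_{14i}$, $\Omega_{44i}=\tfrac{7}{4}\delta^2\Omega_{33i}-S_i$ and the $\tfrac{\sqrt{7}}{2}\delta$-scaled off-diagonals. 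Integrating over $[0,\infty)$ with $V(0)=0$, $V\ge0$, and invoking the Proposition in the scaled form $\int_0^{\infty}(z_0^TSz_0-w_0^TSw_0)\,dt\ge0$ yields \Lgain{z}{w}{\gamma} and, for $w=0$, asymptotic stability.

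The second and third movements follow the earlier proofs almost verbatim. As in Theorem \ref{th:th1ter} I would introduce the slack variable $X$ and apply the projection lemma to eliminate the products between $P,S,R$ and the system matrices; the Schur reductions create the auxiliary blocks carrying $-P_i,-S_i,-R_i$ collected in $\Omega_{3i}$, together with the coupling row $[\,X\ \ S_i\ \ h_MR_i\,]$ of $\Omega_{2i}$, the $-S_i$ cross-term between the $P$- and $S$-blocks playing the same descriptor role as the $-h_MR_i$ coupling already present in (\ref{eq:th1ter1}). Substituting the closed-loop matrices of (\ref{eq:cl3}), fixing $X$ constant, applying the congruence $\diag(I_4\otimes X^{-1},I_{p+q},I_3\otimes X^{-1})$ (the extra $X^{-1}$ blocks acting on the new $w_0$ and $S$ channels) and the change of variables $X\leftarrow X^{-1}$, $P_i\leftarrow X^{-T}P_iX^{-1}$, $Q_i,R_i,S_i$ analogously, $Y_0=K_0X^{-1}$, $Y_h=K_hX^{-1}$, collapses everything to (\ref{eq:rtyuio31}); the gains are recovered as $K_0(\rho)=Y_0(\rho)X^{-1}$ and $K_h(\rho)=Y_h(\rho)X^{-1}$. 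Setting $\delta=0$ deletes the $w_0$ and $S$ rows and returns (\ref{eq:th2_1}), the announced consistency with Theorem \ref{th:th2_1}.

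The hard part will not be the bookkeeping but the scaled small-gain step. The clean inequality $\int(z_0^TSz_0-w_0^TSw_0)\ge0$ follows from $||\Delta||\le1$ only when $S$ commutes with $\Delta$, which is automatic for a \emph{constant} multiplier (since $\Delta$ acts with a scalar kernel on each component) but is delicate for the parameter-dependent $S(\rho)$ advertised in the statement: $\Delta$ averages $z_0$ over the past window $[t-d,t-h]$ whereas $S(\rho(t))$ is evaluated at the current instant, so $\Delta(S^{1/2}(\rho)z_0)\neq S^{1/2}(\rho)\Delta(z_0)$ in general. Rigorously one must either restrict $S$ to be constant, or provide an additional argument (a parameter-dependent IQC or a Razumikhin-type bound) legitimising the multiplier against the time-varying operator; I would single this step out as the one demanding the most care. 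A secondary, purely technical, difficulty is verifying that the $P_i$–$S_i$ cross-coupling induced by the descriptor embedding of $z_0=\dot{x}$ survives the congruence without breaking the linearity of the change of variables.
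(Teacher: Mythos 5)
Your overall architecture (the operator $\Delta$ and system (\ref{eq:cl3}), a D-scaling supply rate, the projection-lemma relaxation of Theorem \ref{th:th1ter}, then the congruence and change of variables of Theorem \ref{th:th2_1}) is the same as the paper's, but your ``first movement'' contains a genuine gap. You propose to rewrite the functional (\ref{eq:LKF}) with $h$ replaced by $d$ and to bound $-(1-\dot d)\,x_d^TQx_d$ by $-(1-\mu)\,x_d^TQx_d$, ``inheriting'' $\dot d\le\mu$. No such bound exists: the set $\mathscr{D}_\delta$ in (\ref{eq:dset}) constrains only the amplitude $|d(t)-h(t)|\le\delta$, and Remark \ref{rem:smooth} states explicitly that the derivative of the error is unrestricted, so $\dot d=\dot h+\dot\eta$ may exceed $1$ and the $Q$-term in $\dot V$ cannot be bounded. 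The paper avoids this trap: it keeps the functional (\ref{eq:LKF}) written with the \emph{system} delay $h$ (for which $\dot h\le\mu<1$ does hold), differentiates it along the two-delay closed loop (\ref{eq:cl2}), obtaining a quadratic form in $(x,x_h,x_d,w)$, and only afterwards eliminates $x_h$ through the algebraic identity $x_h=x_d+\frac{\sqrt{7}}{2}\delta\,w_0$ (the matrix $M$ in the appendix), which is precisely the relation defining $\Delta$. The controller delay $d$ never appears as the argument of the $Q$-integral, so no bound on $\dot d$ is ever needed — this is the key idea your proposal misses.

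This is not a cosmetic difference, because it changes the algebra and explains why your accounting of the blocks of (\ref{eq:rtyuio31}) is wrong. In your derivation $w_0$ enters only through the input matrices $\frac{\sqrt{7}}{2}\delta A_h$ and $\frac{\sqrt{7}}{2}\delta C_h$ of (\ref{eq:cl3}); that correctly produces $\Omega_{14i}$ and the output column, but it makes the $w_0$-diagonal equal to $-S_i(\rho)$ alone and leaves the $(x_d,w_0)$ cross-block zero. The entries $\frac{\sqrt{7}}{2}\delta R_i$, $\frac{\sqrt{7}}{2}\delta\,\Omega_{33i}$ and the $\frac{7}{4}\delta^2\Omega_{33i}$ part of $\Omega_{44i}$ cannot arise that way: they come from pushing the $x_h$-quadratic forms $-(1-\mu)Q$ and the Jensen term $R$ through the substitution, i.e.\ from $M^T\Psi M$ in the paper's proof. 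So even granting the illegal bound on $\dot d$, your dissipation inequality would not collapse to the stated LMI. Your closing concern — that the parameter-dependent scaling $S(\rho)$ does not commute with the non-memoryless operator $\Delta$, so the integral small-gain inequality needs justification — is a legitimate subtlety (the paper's own proof uses a parameter-dependent $L(\lambda,\rho)$ and does not address it either), but it is secondary to the structural gap above.
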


\begin{proof}{}
 The proof is given in \ref{ap:th2_2}.
\end{proof}

\begin{remark}\label{rem:smooth}
Here, the derivative of the error is not restricted and is allowed to be arbitrarily large. If for some reason, the derivative is bounded from above by one, the latter results can be refined. Indeed, a sharper bound on $||\Delta||_{\mathcal{L}_2-\mathcal{L}_2}$ can be shown to be 1. Moreover, the derivative bound can also be taken into account through the introduction of another operator (the delay operator) similarly as in \cite{GuKC:03}.
%  It must be pointed out that the error on the delay knowledge is
%supposed to have arbitrarily fast variation rate since it is
%considered, for instance, that the estimated delay is noisy.
%However, it is possible to use a constant delay $d$ in the
%controller and in this case, it would be less conservative to
%consider also the rate of variation of the error $\eta(t)=d-h(t)$
%given by $\dot{\eta}(t)=\dot{h}(t)$. This may be done
%using the following operator:
%\begin{equation}
%  \nabla(\zeta(t-h(t)))=\sqrt{1-\mu}\zeta(t-d)
%\end{equation}
%satisfying
%$||\nabla(\xi)||_{\mathcal{L}_2}\le||\xi||_{\mathcal{L}_2}$
%\citep{GuKC:03}. Moreover, under the assumption of a constant delay $d$, the term $2/\sqrt{7}$ of the operator $\Delta(\cdot)$ can be set to 1 according to the result in \citep{GuKC:03}.
\end{remark}

It is important to note that when the delay is exactly known (i.e. $\delta=0$), then Theorem \ref{th:th3_2} reduces to Theorem \ref{th:th2_1}. This is stated in the following proposition:
\begin{proposition}\label{rq:gen}
When $\delta=0$, then LMIs
(\ref{eq:rtyuio31}) is equivalent to LMIs (\ref{eq:th2_1}) provided that the matrix $S(\lambda,\rho)\succ0$
is chosen sufficiently small (e.g. according to the 2-norm).
\end{proposition}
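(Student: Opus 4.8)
The plan is to prove the claimed equivalence at the level of LMI feasibility by a direct algebraic reduction of (\ref{eq:rtyuio31}) at $\delta=0$, keeping all shared decision variables $P_i,Q_i,R_i,X,Y_0,Y_h,\gamma$ identical and treating only the extra variable $S_i$ of (\ref{eq:rtyuio31}) as a free quantity to be adjusted. First I would set $\delta=0$: then $\Omega_{14i}=0$, $\Omega_{44i}=-S_i(\rho)$, and every entry carrying the factor $\tfrac{\sqrt7}{2}\delta$ vanishes, so the fourth block row and column (the channel associated with $w_0$) decouple completely, leaving only the diagonal block $-S_i(\rho)\prec0$. Since $S_i(\rho)\in\mathbb{S}^n_{++}$ this block is automatically negative definite, whence (\ref{eq:rtyuio31}) with $\delta=0$ is negative definite if and only if the $8$-block principal submatrix obtained by deleting this row and column is negative definite.

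Next I would eliminate the $S_i$-block sitting inside $\Omega_{3i}$ (its diagonal is $-S_i(\rho)\prec0$) by a Schur complement. This block couples only to the slack block (through the entry $S_i$ in $\Omega_{2i}$) and to the $P$-block (through the entry $-S_i$ in $\Omega_{3i}$), so the Schur complement leaves every other entry untouched and modifies exactly three blocks: the slack--slack block becomes $-X^\mathcal{S}+S_i$, the $P$-block becomes $-P_i+S_i$, and the slack/$P$ coupling becomes $X-S_i$. Comparing the resulting $7$-block matrix with (\ref{eq:th2_1}) and using $\Omega_{12i}=\Xi_{12i}$, the coincidence of the $(1,3)$ blocks, $\Omega_{26i}=\Xi_{24i}$ and $\Omega_{36i}=\Xi_{34i}$, the two matrices agree block-for-block except for the added term
\[
\mathcal{S}_i=\begin{bmatrix} I\\ -I\end{bmatrix}S_i(\rho)\begin{bmatrix} I & -I\end{bmatrix}\succeq0 ,
\]
supported on the slack and $P$ blocks. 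Hence the reduced matrix equals the left-hand side of (\ref{eq:th2_1}) plus the positive semidefinite perturbation $\mathcal{S}_i$.

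From this identity both implications follow. If (\ref{eq:rtyuio31}) holds at $\delta=0$ then the reduced matrix is negative definite, so the left-hand side of (\ref{eq:th2_1}), being the reduced matrix minus $\mathcal{S}_i\succeq0$, is a sum of a negative definite and a negative semidefinite matrix and is therefore $\prec0$; this direction needs no smallness of $S_i$. Conversely, if (\ref{eq:th2_1}) holds then, by compactness of $U_\rho\times co\{U_\nu\}$, finiteness of $\{1,\dots,N\}$ and continuity of all data, its left-hand side is bounded above by $-\epsilon I$ for some $\epsilon>0$ uniformly in $(\rho,\nu,i)$. Choosing $S_i=sI$ with $2s<\epsilon$ gives $\|\mathcal{S}_i\|_2\le 2s<\epsilon$, so the reduced matrix remains negative definite; reversing the Schur complement (legitimate since $-S_i\prec0$) and re-inserting the decoupled $-S_i\prec0$ block then recovers (\ref{eq:rtyuio31}) at $\delta=0$.

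The only genuine obstacle is this converse ``sufficiently small $S$'' step. Because $S_i$ is constrained to be strictly positive definite it cannot literally be set to zero, so one cannot obtain a term-by-term identity; instead one must exploit the strictness margin of (\ref{eq:th2_1}) --- guaranteed \emph{uniformly} over the compact parameter set --- to absorb the rank-deficient perturbation $\mathcal{S}_i$, which is precisely the hedge appearing in the statement and makes rigorous the intuitive limit $S\to0^+$. This is exactly why the result is an equivalence of feasibility between the two LMI families rather than an algebraic coincidence of the matrices themselves.
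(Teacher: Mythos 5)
Your proof is correct and follows the paper's own argument step for step: set $\delta=0$, delete the decoupled fourth block row/column (whose only nonzero entry is the diagonal $-S_i(\rho)\prec0$), Schur-complement the $-S_i(\rho)$ block of $\Omega_{3i}$, and identify the result as the left-hand side of (\ref{eq:th2_1}) plus a rank-$n$ positive semidefinite perturbation supported on the slack and $P$ blocks, which is absorbed by taking $S_i$ small. The only difference is that you make explicit what the paper leaves as a sketch --- namely that the implication from (\ref{eq:rtyuio31}) to (\ref{eq:th2_1}) requires no smallness of $S_i$ at all, and that the converse direction needs a \emph{uniform} negativity margin over the compact set $U_\rho\times co\{U_\nu\}\times\{1,\ldots,N\}$ to absorb the perturbation --- so the two arguments are essentially identical.
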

\begin{proof}{}
The proof is only sketched since it relies on simple arguments and easy calculations. First, set $\delta=0$ in (\ref{eq:rtyuio31}), this creates 0 entries on the 4$^{th}$ row and column of (\ref{eq:rtyuio31}) except for the diagonal value which is $-S_i(\rho)$. Since $S_i(\rho)$ is positive definite we can remove the 4$^{th}$ row and column from (\ref{eq:rtyuio31}). Now, we just need to analyze the impact of the remaining terms depending on $S_i(\rho)$ which are located on the 8$^{th}$ row and column of (\ref{eq:rtyuio31}). A Schur complement on the block $(8,8)$ leads to a matrix sum of the form $\Lambda_i(\rho)+\Upsilon_i(\rho)\prec0$ where $\Lambda_i(\rho)$ is exactly LMI (\ref{eq:th2_1}) and
$$\Upsilon_i(\rho)=\begin{bmatrix}
  S_i(\rho) & 0 & \ldots & 0 & -S_i(\rho) & 0\\
  0 & 0 & \ldots & 0 & 0 & 0\\
  \vdots & \vdots & \ddots & \vdots & \vdots & \vdots\\
  0 & 0 & \ldots & 0 & 0 & 0\\
  -S_i(\rho) & 0 & \ldots & 0 & S_i(\rho) & 0\\
  0 & 0 & \ldots & 0 & 0 & 0
\end{bmatrix}$$
It can be shown that this matrix has $n$ positive eigenvalues (those of $S_i(\rho)$) and $\alpha-n$ zero eigenvalues where $\alpha$ is the dimension of $\Upsilon_i(\rho)$. Thus, choosing $S_i(\rho)$ as small as necessary, it is possible to approximate arbitrarily well LMI (\ref{eq:th2_1}) by LMI (\ref{eq:rtyuio31}) with $\delta=0$.
\end{proof}

This shows that the main result embeds naturally (by construction) the case of controller with exact memory. Another important fact is when $\delta=h_M$, we get a result which is close to the memoryless case. This will be illustrated in the examples.

\begin{remark}
  The LMI conditions of Theorems \ref{th:th1ter}, \ref{th:th2_1} and \ref{th:th3_2}  must be satisfied for
all $(\rho,\nu)\in U_\rho\times co\{U_\nu\}$. However, it is
possible to reduce the computational complexity through the reduction of parameter set from $U_\rho\times co\{U_\nu\}$ to $U_\rho\times U_\nu$. This is done using the following particular structure for the matrix
$P(\lambda,\rho)=P_0(\rho)+\sum_{i=1}^N\lambda_iP_i$. In this case,
the resulting LMI conditions involve no product between $\lambda$ and $\dot{\rho}$ and hence LMIs have to be checked only on the set $U_\rho\times co\{U_\nu\}$.
\end{remark}

\section{Examples}\label{sec:ex}

This section is devoted to examples and discussions on the
provided approach. It will be illustrated that the current approach
improves result of the literature in the control of LPV time-delay
systems. Moreover, the connection between memoryless and
exact-memory controllers through $\delta$-memory resilient
controllers will be illustrated. Let us consider the following system borrowed from \citep{Wu:01a}
and modified in \citep{Zhang:05}:
\begin{equation}\label{eq:exsyst}
\begin{array}{lcl}
  \dot{x}(t)&=&\begin{bmatrix}
    0 & 1+\phi\sin(t)\\
    -2 & -3+\sigma\sin(t)
  \end{bmatrix}x(t)+\begin{bmatrix}
    \phi\sin(t) & 0.1\\
    -0.2+\sigma\sin(t) & -0.3
  \end{bmatrix}x(t-h(t))\\
  &&+\begin{bmatrix}
    0.2\\0.2
  \end{bmatrix}w(t)+\begin{bmatrix}
    \phi\sin(t)\\
    0.1+\sigma\sin(t)
  \end{bmatrix}u(t)\\
  z(t) &=& \begin{bmatrix}
    0 & 10\\
    0 & 0
  \end{bmatrix}x(t)+\begin{bmatrix}
    0\\0.1
  \end{bmatrix}u(t)
\end{array}
\end{equation}
where $\phi=0.2$ and $\sigma=0.1$. Define $\rho(t):=\sin(t)$, $h_M=0.5$ and $\mu=0.5$ as in \citep{Zhang:05}.

\subsection{Example 1: Memoryless State-Feedback (Small delay)}

First, a memoryless control law is computed using Theorem \ref{th:th2_1}.
The parameter dependent decision matrices are chosen to be
\begin{equation*}
\begin{array}{lcl}
      P(\rho)&=&P^0+P^1\rho+P^2\rho^2\\
      Y_0(\rho)&=&Y_0^0+Y_0^1\rho+Y_0^2\rho^2\\
\end{array}
\end{equation*}
Verifying the LMI of Theorem  \ref{th:th2_1} over a grid of
$N_g=41$ points yields a minimal value $\gamma^*=1.9089$ which is
better than all results obtained before. In \citep{Zhang:05}, a
minimal value of $\gamma=3.09$ is found while in \citep{Briat:08c}
an optimal value of $\gamma=2.27$ is obtained (using a nonlinear
approach). The controller computed using Theorem  \ref{th:th2_1} is
given by
$$K_0(\rho)=\begin{bmatrix}
  -1.0535-2.9459\rho+1.9889\rho^2\\
  -1.1378-2.6403\rho+1.9260\rho^2
\end{bmatrix}^T$$
It is worth noting that the results are even better while the
controller has smaller coefficients than in the other approaches
\citep{Zhang:05, Briat:08c}. It is hence expected to have a smaller
control input which should remain within acceptable bounds, even in
presence of disturbances.

The influence of $\mu$ on the delay-margin (with $\mathcal{L}_2$
performance constraint) is detailed in Table \ref{tab:hvsmu} where,
as expected, the delay margin decreases as the value of $\mu$
increases. Moreover, the results of \citep{Zhang:05} are more
conservative than those obtained using Theorem  \ref{th:th2_1} since the
delay-margin is always smaller than $1.4$ (for $\gamma<10$) for any value of $\mu$.

As a final remark, the maximal value for $h_M$ obtained for $\mu=0$
is large and this suggests that the system might be
delay-independent stabilizable in the constant delay case.

\begin{table*}
%\begin{table}\label{tab:hvsmu}
  \begin{center}
    \begin{tabular}{l|c|c|c|c}
        $\mu$ & 0 & 0.5 & 0.9 & 0.99\\
        \hline
        \hline
        $h_M$ \citep{Zhang:05} & 1.2 & 1.2 & 1.2 & $\sim1$\\
        $h_M$ \citep{Briat:08c} & -- & 79.1511 & -- & --\\
        $h_M$ Theorem \ref{th:th2_1} & 929.1372 & 371.0928 & 6.9218 & 2.9325
    \end{tabular}
    \caption{Evolution of the delay margin $h_M$ with respect to the
    bound on the delay derivative $\mu$ for a maximal allowable $\gamma<10$}\label{tab:hvsmu}
  \end{center}
%\end{table}
\myline
\end{table*}

\subsection{Example 2: State Feedback with Exact Memory (Large delay)}

Still considering system (\ref{eq:exsyst}) but with $h_M=10$ and
$\mu=0.9$, an instantaneous state-feedback controller of
the form $u(t)=K_0(\rho)x(t)$ is sought. Theorem \ref{th:th2_1} yields:
\begin{equation}\label{eq:ml}
K_0(\rho)=\begin{bmatrix} 0.5724-6.3679\rho
-1.4898\rho^2\\
-0.7141-4.1617\rho-0.8425\rho^2
\end{bmatrix}^T
\end{equation}
which ensures a closed-loop $\mathcal{L}_2$ input/output performance
level of $12.8799$.
Now an exact-memory state-feedback control law
$u(t)=K_0(\rho)x(t)+K_h(\rho)x(t-h(t))$ is computed using Theorem
\ref{th:th2_1} and the obtained controller gains are given by
\begin{equation*}
  \begin{array}{lcl}
   K_0(\rho)&=&\begin{bmatrix}
    1.0524-2.8794\rho-0.4854\rho^2\\
    -0.7731-1.8859\rho+0.1181\rho^2
  \end{bmatrix}^T\\
    K_h(\rho)&=&\begin{bmatrix}
      -0.6909+0.5811\rho+0.1122\rho^2\\
      -0.0835+0.3153\rho+0.0689\rho^2
    \end{bmatrix}^T
  \end{array}.
\end{equation*}
This controller ensures a closed-loop $\mathcal{L}_2$ performance level of
$4.1641$. The gain of performance resulting from the use of a
controller involving a delayed term is evident. However, this controller is non implementable due to the practical impossibility of knowing the exact delay value at any time. This motivates the synthesis of a memory-resilient controller.

\subsection{Example 3: $\delta$-Memory Resilient Controller
Synthesis (Large delay)}

Finally, a $\delta$-memory resilient state-feedback controller stabilizing system (\ref{eq:exsyst}) is sought.
Using Theorem \ref{th:th3_2}, the achieved minimal closed-loop
$\mathcal{L}_2$ performance with respect to $\delta$ is plotted in
Figure \ref{fig:gamma_vs_delta}. As expected the minimal
$\mathcal{L}_2$ performance level grows as the delay
uncertainty radius increases. This illustrates that the achieved closed-loop
performance is deteriorated when the delay is badly known.

Moreover, there are two remarkable values for the worst-case
$\mathcal{L}_2$ gain, respectively obtained for $\delta=0$ and
$\delta=h_M$. For these particular values we have:
\begin{equation}
  \begin{array}{lcl}
    \left.\gamma\right|_{\delta=0}&=&4.1658\\
    \left.K_0(\rho)\right|_{\delta=0}&=&\begin{bmatrix}
      1.0542-2.8895\rho-0.4827\rho^2\\
      -0.7714-1.8912\rho+0.1216\rho^2
    \end{bmatrix}^T\\
    \left.K_h(\rho)\right|_{\delta=0}&=&\begin{bmatrix}
      -0.6885+0.5849\rho+0.1116\rho^2\\
      -0.0817+0.3148\rho+0.0667\rho^2
    \end{bmatrix}^T
  \end{array}
\end{equation}
\begin{equation}\label{eq:delta10}
  \begin{array}{lcl}
    \left.\gamma\right|_{\delta=10}&=&13.0604\\
     \left.K_0(\rho)\right|_{\delta=10}&=&\begin{bmatrix}
      0.4422-6.3469\rho-1.3619\rho^2\\
     -0.9475-4.1219\rho-0.6140\rho^2
    \end{bmatrix}^T\\
    \left.K_h(\rho)\right|_{\delta=10}&=&\begin{bmatrix}
     -0.0163-0.0005\rho+0.0127\rho^2\\
     -0.0007-0.0006\rho+0.0011\rho^2
    \end{bmatrix}^T
  \end{array}
\end{equation}
When the delay is exactly known (i.e. $\delta=0$), the
$\mathcal{L}_2$ performance index and the controller are very close
(quite identical) to the ones obtained using Theorem
\ref{th:th2_1} which considers exact-memory controllers. This
illustrates well Remark \ref{rq:gen}.

On the other hand, when $\delta=h_M$, it could be considered that the
delay is actually unknown since for any value for $h(t)$, the
implemented value $d(t)$ may take any value into $[0,h_M]$. Hence,
this means that the results with such a value for $\delta$ should be
close to the results obtained using a memoryless control law.
Comparing (\ref{eq:delta10}) and (\ref{eq:ml}) it is possible to
remark that the obtained closed-loop performance level is very near to the one obtained with a memoryless control law. Moreover,  the matrix gain $K_h(\rho)$ in (\ref{eq:delta10}) has a small
norm making it almost identical to a memoryless controller. The delayed action is highly penalized due to a too
large uncertainty on the delay knowledge. The results are summarized
in Table \ref{tab:comp}.
\begin{table}
  \begin{tabular}{l|l||l|l}
    Exact memory & $\gamma=4.1641$ & Memoryless & $\gamma=12.8799$\\
    \hline
    0-resilient & $\gamma=4.1658$ & 10-resilient & $\gamma=13.0604$
  \end{tabular}
  \caption{Comparison of the results obtained using Theorem \ref{th:th3_2}}\label{tab:comp}
\end{table}

The above discussion illustrates that $\delta$-memory resilient controllers define
the intermediary behavior of the closed-loop system between the two
extremal controllers: the memoryless and the exact-memory
controllers. The emphasis of this continuity between memoryless and
exact memory controller through $\delta$-memory resilient
controllers constitutes one of the main contribution of the
paper. Moreover, such controllers are also more realistic, from a practical point of view, than exact-memory controllers.
\begin{figure}
\begin{center}
    \includegraphics[width=0.6\textwidth]{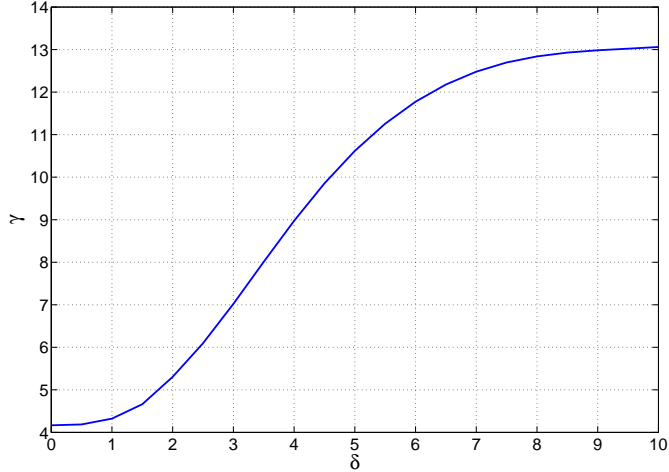}
    \caption{Evolution of the worst-case $\mathcal{L}_2$ gain of the closed-loop system with respect to maximal delay uncertainty $\delta$}\label{fig:gamma_vs_delta}
    \end{center}
\end{figure}
\section{Conclusion}

The current paper introduces a new approach to the stabilization of
LPV time-delay systems using parameter dependent state-feedback
controllers. First, a delay-dependent stability test with
$\mathcal{L}_2$ performance analysis for LPV time-delay systems with
time-varying delays is provided in terms of parameter dependent
LMIs. This result is obtained from the use of a parameter dependent
Lyapunov-Krasovskii functional used along with the Jensen's
inequality, an approach which has proven its efficiency. Since this
result is not well suited for design purpose due to multiple
products between decision matrices and system matrices, a relaxed
version of the result is developed. This version involves an
additional 'slack' (or 'lifting') variable and avoids any nonlinear terms (multiple
products). This allows to find a linearizing change of variable for
the stabilization problem.

A first stabilization result is then provided and characterizes both
memoryless and exact-memory controllers. However, due to the
difficulty of estimating delays, latter controllers are generally
non-implementable in practice. This has motivated the development of
another type of controllers, called '$\delta$-memory resilient controllers' where the delay implementation error is taken into account in the design. It turns out that this new class of controllers connect memoryless and exact-memory
controllers together, through a unique formulation. Indeed, by acting the error bound $\delta$ between the delays, it is possible to recover exact-memory ($\delta=0$), $\delta$-memory resilient ($\delta\in(0,h_M)$) and memoryless ($\delta\sim h_M$) control laws successively. This part constitutes the main contribution of the paper and is illustrated in the examples.

\appendix

\section{Proof of theorem \ref{th:th3_2}}\label{ap:th2_2}

%%%%%%%%%%%%%%%%%%%%%%%%%%%%%%%%%%%%%%%%%%%%%%%%%%%%%%%%%%%%%%%%%%%%%%%%%%%
%%%%%%%%%%%%%%%%%%%%%%%%%%%%%%%%%%%%%%%%%%%%%%%%%%%%%%%%%%%%%%%%%%%%%%%%%%%
\begin{figure*}
  \begin{equation}\label{eq:rtyuio}
  \begin{bmatrix}
    A_{cl}^TP+PA_{cl}+\dot{P}+Q-R & PA_{hcl}+R &\frac{\sqrt{7}}{2}\delta PA_h+\delta R & PE & C_{cl}^T & h_MA_{cl}^TR & A_{cl}^TL\\
    \star & -Q_\mu-R & -\sqrt{2}\delta(Q_\mu+R) & 0 & C_{hcl}^T & h_MA_{hcl}^TR & A_{hcl}^TL\\
    \star & \star & -\frac{7}{4}\delta^2(Q_\mu+R)-L & 0 & \frac{\sqrt{7}}{2}\delta C_h^T & \frac{\sqrt{7}}{2}h_M\delta A_h^TR & \frac{\sqrt{7}}{2}\delta A_h^TL\\
     \star & \star & \star & -\gamma I_p & F^T & h_ME^TR & E^TL\\
     \star & \star & \star & \star & -\gamma I_q & 0 & 0\\
     \star & \star & \star & \star & \star & -R & 0\\
    \star & \star & \star & \star & \star & \star & -L
    \end{bmatrix}\prec0
  \end{equation}
\end{figure*}
\begin{figure*}
  \begin{equation}\label{eq:rtyuio2}
  \begin{bmatrix}
    -(X+X^T) & P+X^TA_{cl} & X^TA_{hcl} & \frac{\sqrt{7}}{2}\delta X^TA_h & X^TE & 0 & X^T & L & h_MR\\
    \star & \dot{P}-P+Q-R & R & \frac{\sqrt{7}}{2}\delta R & 0 & C_{cl}^T & 0 & 0 & 0\\
    \star & \star & -Q_\mu-R & -\frac{\sqrt{7}}{2}\delta(Q_\mu+R) & 0 & C_{hcl}^T & 0 & 0 & 0\\
     \star & \star & \star & -\frac{7}{4}\delta^2(Q_\mu+R)-L & 0 & \frac{\sqrt{7}}{2}\delta C_h^T & 0 & 0 & 0\\
     \star & \star & \star & \star & -\gamma I_p & F^T & 0 & 0 & 0\\
     \star & \star & \star & \star & \star & -\gamma I_q & 0 & 0 & 0\\
     \star & \star & \star & \star & \star & \star & -P & -L & -h_MR\\
    \star & \star & \star & \star & \star & \star & \star & -L & 0\\
    \star & \star & \star & \star & \star & \star & \star & \star & -R\\
    \end{bmatrix}\prec0
  \end{equation}
\end{figure*}
%%%%%%%%%%%%%%%%%%%%%%%%%%%%%%%%%%%%%%%%%%%%%%%%%%%%%%%%%%%%%%%%%%%%%%%%%%%
%%%%%%%%%%%%%%%%%%%%%%%%%%%%%%%%%%%%%%%%%%%%%%%%%%%%%%%%%%%%%%%%%%%%%%%%%%%

Let us consider the closed-loop system (\ref{eq:cl2}) and the
Lyapunov-Krasovskii functional V given in (\ref{eq:LKF}). Computing the
derivative of (\ref{eq:LKF}) along the trajectories solutions of
system  (\ref{eq:cl2}) we get
\begin{equation}
\begin{array}{lcl}
      \dot{V}&\le& X(t)^T\Psi(\lambda,\dot{\rho},\rho)X(t)\\
      \Psi(\lambda,\rho,\dot{\rho})&=&\Pi(\lambda,\rho,\dot{\rho})+h_M^2\Gamma_1(\lambda,\rho)^TR(\lambda)\Gamma_1(\lambda,\rho)
\end{array}
\end{equation}
where $X(t)=col(x(t),x_h(t),x_d(t),w(t))$, $\Pi$ is defined by
\begin{equation*}\label{eq:pi11}
  \begin{bmatrix}
    \Pi_{11} & \Pi_{12} & \Pi_{13} & P(\lambda,\rho)E(\lambda,\rho)\\
    \star & -(1-\mu)Q(\lambda)-R(\lambda) & 0 & 0\\
    \star & \star & 0 & 0\\
    \star & \star & \star & 0
  \end{bmatrix}
\end{equation*}
with
\begin{equation*}
  \begin{array}{lcl}
    \Pi_{11}&=&\left[A_{cl}(\lambda,\rho)^TP(\lambda,\rho)\right]^\mathcal{S}+\dfrac{\partial P(\lambda,\rho)}{\partial \rho}\dot{\rho}+Q(\lambda)-R(\lambda)\\
    \Pi_{12}&=&P(\lambda,\rho)A_h(\lambda,\rho)+R(\lambda)\\
    \Pi_{13}&=&P(\lambda,\rho)B(\lambda,\rho)K_h(\rho)
  \end{array}
\end{equation*}
and $\Gamma_1(\rho)=\begin{bmatrix}A_{cl}(\rho) & A_h(\rho) &
B(\rho)K_h(\rho) & E(\rho)
\end{bmatrix}$.
Now according to the relation
$w_0=\Delta(\dot{x})=2(x_h(t)-x_d(t))/\sqrt{7}\delta$ we have
\begin{equation}
  X(t)=\underbrace{\begin{bmatrix}
    I & 0 & 0 & 0\\
    0 & I & \frac{\sqrt{7}\delta}{2} I & 0\\
    0 & I & 0 & 0\\
    0 & 0 & 0 & I
  \end{bmatrix}}_{\mbox{$M$}}\underbrace{\begin{bmatrix}
    x(t)\\
    x_d(t)\\
    w_0(t)\\
    w(t)
  \end{bmatrix}}_{\mbox{$Y(t)$}}
\end{equation}
and thus
\begin{equation}\label{eq:djsqkodjqsl}
  \dot{V}\le Y(t)^TM^T\Psi(\lambda,\dot{\rho},\rho)MY(t)
\end{equation}
In order to consider the uncertain norm-bounded operator $\Delta$
and input/output $\mathcal{L}_2$ performance, the following
quadratic supply-rate
$s(\zeta(t))=\zeta(t)^T\Phi(\lambda,\rho)\zeta(t)$ defined by
\begin{equation}
    \Phi(\lambda,\rho)=\diag(-L(\lambda,\rho),L(\lambda,\rho),-\gamma I_p,\gamma I_q)
\end{equation}
is added to (\ref{eq:djsqkodjqsl}) where $\zeta(t)=col(w_0(t),z_0(t),w(t),z(t))$ and $w_0(t)=\Delta[z_0(t)]$. Here,
$L(\lambda,\rho)=L(\lambda,\rho)^T\succ0$ is a parameter dependent D-scaling
and $\gamma>0$ characterizes the upper bound on the $\mathcal{L}_2$ gain of the transfer $w\to z$. This leads
to inequality (\ref{eq:rtyuio}) where the dependence on $\rho$ and
$\lambda$ has been dropped for clarity. Then applying the same
relaxation procedure as in the proof of Theorem \ref{th:th1ter} we
get the new inequality (\ref{eq:rtyuio2}) with
$A_{hcl}(\cdot)=A_h(\cdot)+B(\cdot)K_h(\cdot)$. Finally performing a
congruence transformation with respect to
$$diag(I_4\otimes X^{-1},I_{p+q},I_3\otimes X^{-1})$$
along with the change of variables
\begin{equation}
\begin{array}{lclclcl}
X&\leftarrow & X^{-1} & &P_i(\rho)&\leftarrow & X^{-T}P_i(\rho)X^{-1}\\
Q_i&\leftarrow & X^{-T}Q_iX^{-1} && R&\leftarrow & X^{-T}R_iX^{-1}\\
S_i(\rho)&\leftarrow & X^{-T}S_i(\rho)X^{-1} && Y_0(\rho)&\leftarrow & K_0(\rho)X^{-1}\\
Y_h(\rho)&\leftarrow & K_h(\rho)X^{-1}
\end{array}
\end{equation}
yields LMI (\ref{eq:rtyuio31}). This concludes the proof.

\bibliographystyle{plain}
%\bibliography{../../../Lastbib/global}

\begin{thebibliography}{0}
\expandafter\ifx\csname natexlab\endcsname\relax\def\natexlab#1{#1}\fi
\expandafter\ifx\csname url\endcsname\relax
  \def\url#1{\texttt{#1}}\fi
\expandafter\ifx\csname urlprefix\endcsname\relax\def\urlprefix{URL }\fi

\end{thebibliography}


\begin{thebibliography}{40}
\expandafter\ifx\csname natexlab\endcsname\relax\def\natexlab#1{#1}\fi
\expandafter\ifx\csname url\endcsname\relax
  \def\url#1{\texttt{#1}}\fi
\expandafter\ifx\csname urlprefix\endcsname\relax\def\urlprefix{URL }\fi

\bibitem[{Apkarian and Adams(1998)}]{Apkarian:98a}
Apkarian, P., Adams, R., 1998. Advanced gain-scheduling techniques for
  uncertain systems. IEEE Transactions on Automatic Control 6, 21--32.

\bibitem[{Apkarian and Gahinet(1995)}]{Apkarian:95a}
Apkarian, P., Gahinet, P., 1995. A convex characterization of gain-scheduled
  $\mathcal{H}_\infty$ controllers. IEEE Transactions on Automatic Control 5,
  853--864.

\bibitem[{Ariba and Gouaisbaut(2009)}]{Ariba:09}
Ariba, Y., Gouaisbaut, F., 2009. Input-output framework for robust stability of
  time-varying delay systems. In: 48th Conference on Decision and Control.
  Shanghai, China, pp. 274--279.

\bibitem[{Belkoura et~al.(2008)Belkoura, Richard, and Fliess}]{Belkoura:08}
Belkoura, L., Richard, J., Fliess, M., 2008. A convolution approach for delay
  systems identification. In: {IFAC} World Congress. Seoul, Korea.

\bibitem[{Bensoussan et~al.(2006)Bensoussan, Prato, Delfour, and
  Mitter}]{Bensoussan:06}
Bensoussan, A., Prato, G.~D., Delfour, M., Mitter, S., 2006. Representation and
  Control of Infinite Dimensional Systems - $2^{nd}$ Edition. Springer.

\bibitem[{Boyd et~al.(1994)Boyd, Ghaoui, Feron, and Balakrishnan}]{Boyd:94a}
Boyd, S., Ghaoui, L.~E., Feron, E., Balakrishnan, V., 1994. Linear Matrix
  Inequalities in Systems and Control Theory. PA, SIAM, Philadelphia.

\bibitem[{Briat(2008)}]{Briat:08phd}
Briat, C., 2008. Control and observation of {LPV} time-delay systems. Ph.D.
  thesis, Grenoble-INP. Available at \url{{http://www.briat.info/thesis/PhDThesis.pdf}}

\bibitem[{Briat et~al.(2008{\natexlab{a}})Briat, Sename, and Lafay}]{Briat:08a}
Briat, C., Sename, O., Lafay, J., 2008{\natexlab{a}}. Delay-scheduled
  state-feedback design for time-delay systems with time-varying delays. In:
  {IFAC} World Congress, Korea, Seoul.

\bibitem[{Briat et~al.(2008{\natexlab{b}})Briat, Sename, and Lafay}]{Briat:08c}
Briat, C., Sename, O., Lafay, J., 2008{\natexlab{b}}. Parameter dependent
  state-feedback control of {LPV} time delay systems with time varying delays
  using a projection approach. In: {IFAC} World Congress, Korea, Seoul.

\bibitem[{Briat et~al.(2009{\natexlab{a}})Briat, Sename, and Lafay}]{Briat:09e}
Briat, C., Sename, O., Lafay, J., 2009{\natexlab{a}}. Delay-scheduled
  state-feedback design for time-delay systems with time-varying delays - a
  {LPV} approach. Systems and Control Letters 58(9), 664--671.

\bibitem[{Briat et~al.(2009{\natexlab{b}})Briat, Sename, and Lafay}]{Briat:09f}
Briat, C., Sename, O., Lafay, J., 2009{\natexlab{b}}. $\mathcal{H}_\infty$
  delay-scheduled control of of linear systems with time-varying delays. IEEE
  Transactions on Automatic Control 42(8), 2255--2260.

\bibitem[{Briat et~al.(2007)Briat, Sename, and Lafay}]{Briat:07}
Briat, C., Sename, O., Lafay, J.-F., 2007. A {LFT}/$\mathcal{H}_\infty$
  state-feedback design for linear parameter varying time delay systems. In:
  European Control Conference 2007, Kos, Greece.

\bibitem[{Fridman(2006{\natexlab{a}})}]{Fridman:06a}
Fridman, E., 2006{\natexlab{a}}. Descriptor discretized {L}yapunov functional
  method: Analysis and design. IEEE Transactions on Automatic Control 51,
  890--897.

\bibitem[{Fridman(2006{\natexlab{b}})}]{Fridman:06}
Fridman, E., 2006{\natexlab{b}}. Stability of systems with uncertain delays: a
  new 'complete' {L}yapunov-{K}rasovskii {F}unctional. IEEE Transactions on
  Automatic Control 51, 885--890.

\bibitem[{Gahinet and Apkarian(1994)}]{Gahinet:94a}
Gahinet, P., Apkarian, P., 1994. A linear matrix inequality approach to
  $\mathcal{H}_{\infty}$ control. International Journal of Robust and Nonlinear
  Control 4, 421--448.

\bibitem[{Gouaisbaut and Peaucelle(2006)}]{Gouaisbaut:06}
Gouaisbaut, F., Peaucelle, D., 2006. Delay dependent robust stability of time
  delay-systems. In: $5^{th}$ {IFAC} Symposium on Robust Control Design.
  Toulouse, France.

\bibitem[{Gu et~al.(2003)Gu, Kharitonov, and Chen}]{GuKC:03}
Gu, K., Kharitonov, V., Chen, J., 2003. Stability of Time-Delay Systems.
  Birkh{\"a}user.

\bibitem[{Han(2005)}]{Han:05}
Han, Q., 2005. Absolute stability of time-delay systems with sector-bounded
  nonlinearity. Automatica 41, 2171--2176.

\bibitem[{Iv\u{a}nescu et~al.(2000)Iv\u{a}nescu, Dion, Dugard, and
  Niculescu}]{Ivanescu:00}
Iv\u{a}nescu, D., Dion, J., Dugard, L., Niculescu, S., 2000. Dynamical
  compensation for time-delay systems: an {LMI} approach. International Journal
  of Robust and Nonlinear Control 10, 611--628.

\bibitem[{Iwasaki and Shibata(2001)}]{Iwasaki:01a}
Iwasaki, T., Shibata, G., 2001. {LPV} system analysis via quadratic separator
  for uncertain implicit systems. IEEE Transactions on Automatic Control 46,
  1195--1208.

\bibitem[{Kao and Rantzer(2007)}]{Kao:07}
Kao, C., Rantzer, A., 2007. Stability analysis of systems with uncertain
  time-varying delays. Automatica 43, 959--970.

\bibitem[{Kolmanovskii and Myshkis(1999)}]{KolmaMy:99}
Kolmanovskii, V., Myshkis, A., 1999. Introduction to the Theory and
  Applications of Functional Differential Equations. Kluwer Academic
  Publishers, Dordrecht, The Netherlands.

\bibitem[{Louisell(1999)}]{Louisell:99}
Louisell, J., 1999. New examples of quenching in delay differential equations
  having time-varying delay. In: 4th European Control Conference.

\bibitem[{Moon et~al.(2001)Moon, Park, Kwon, and Lee}]{Moon:01}
Moon, Y., Park, P., Kwon, W., Lee, Y., 2001. Delay-dependent robust
  stabilization of uncertain state-delayed systems. International Journal of
  Control 74, 1447--1455.

\bibitem[{Niculescu(2001)}]{Niculescu:01}
Niculescu, S.-I., 2001. Delay effects on stability. A robust control approach.
  Vol. 269. Springer-Verlag: Heidelbeg.

\bibitem[{Papachristodoulou et~al.(2007)Papachristodoulou, Peet, and
  S.-I.Niculescu}]{Papa:07}
Papachristodoulou, A., Peet, M.~M., S.-I.Niculescu, 2007. Stability analysis of
  linear systems with time-varying delays: Delay uncertainty and quenching. In:
  46th Conference on Decision and Control. New Orleans, LA, USA, 2007.

\bibitem[{Scherer(2001)}]{Scherer:01}
Scherer, C.~W., 2001. {LPV} control and full-block multipliers. Automatica 37,
  361--375.

\bibitem[{Sename and Briat(2006)}]{Sename:06}
Sename, O., Briat, C., 2006. Observer-based $\mathcal{H}_\infty$ control for
  time-delay systems: a new {LMI} solution. In: IFAC TDS Conference, L'Aquila,
  Italy.

\bibitem[{Shustin and Fridman(2007)}]{Shustin:07}
Shustin, E., Fridman, E., 2007. On delay-derivative-dependent stability of
  systems with fast-varying delays. Automatica 43, 1649--1655.

\bibitem[{Suplin et~al.(2006)Suplin, Fridman, and Shaked}]{Suplin:06}
Suplin, V., Fridman, E., Shaked, U., 2006. $\mathcal{H}_\infty$ control of
  linear uncertain time-delay systems - a projection approach. IEEE
  Transactions on Automatic Control 51, 680--685.

\bibitem[{Tuan et~al.(2001)Tuan, Apkarian, and Nguyen}]{Tuan:01}
Tuan, H., Apkarian, P., Nguyen, T., 2001. Robust and reduced order filtering:
  new {LMI}-based characterizations and methods. {IEEE} Transactions on Signal
  Processing 49, 2875--2984.

\bibitem[{Tuan et~al.(2003)Tuan, Apkarian, and Nguyen}]{Tuan:03}
Tuan, H., Apkarian, P., Nguyen, T., 2003. Robust filtering for uncertain
  nonlinearly parametrized plants. {IEEE} Transactions on Signal Processing 51,
  1806--1815.

\bibitem[{Verriest et~al.(2002)Verriest, Sename, and Pepe}]{VerriestSP:02}
Verriest, E.~I., Sename, O., Pepe, P., 2002. Robust observer-controller for
  delay-differential systems. In: IEEE Conference on Decision and Control. Las
  Vegas, USA.

\bibitem[{Wu(2001)}]{Wu:01}
Wu, F., 2001. A generalized {LPV} system analysis and control synthesis
  framework. International Journal of Control 74, 745--759.

\bibitem[{Wu and Grigoriadis(2001)}]{Wu:01a}
Wu, F., Grigoriadis, K., 2001. {LPV} systems with parameter-varying time
  delays: analysis and control. Automatica 37, 221--229.

\bibitem[{Xu et~al.(2006)Xu, Lam, and Zhou}]{XuLam:06}
Xu, S., Lam, J., Zhou, Y., 2006. New results on delay-dependent robust
  $\mathcal{H}_\infty$ control for systems with time-varying delays. Automatica
  42(2), 343--348.

\bibitem[{Zhang and Grigoriadis(2005)}]{Zhang:05}
Zhang, F., Grigoriadis, K., 2005. Delay-dependent stability analysis and
  $\mathcal{H}_\infty$ control for state-delayed {LPV} system. In:
  Mediterranean Conference on Control and Automation. pp. 1532--1537.

\bibitem[{Zhang et~al.(2001)Zhang, Knospe, and Tsiotras}]{Zhang:01a}
Zhang, J., Knospe, C., Tsiotras, P., 2001. Stability of time-delay systems:
  Equivalence between {L}yapunov and scaled small-gain conditions. IEEE
  Transactions on Automatic Control 46, 482--486.

\bibitem[{Zhang et~al.(2002)Zhang, Tsiotras, and Knospe}]{Zhang:02a}
Zhang, X., Tsiotras, P., Knospe, C., 2002. Stability analysis of {LPV}
  time-delayed systems. Int. Journal of Control 75, 538--558.

\bibitem[{Zhong(2006)}]{Zhong:06}
Zhong, Q., 2006. Robust Control of Time-Delay Systems. Springer-Verlag, London,
  UK.

\end{thebibliography}

\end{document}